\documentclass[10pt,letterpaper]{article}

\usepackage[utf8]{inputenc}
\usepackage[T1]{fontenc}
\usepackage{lmodern}
\newcommand{\CMS}{\texorpdfstring{{\mdseries\upshape\textsc{CRDTMergeState}}}{CRDTMergeState}}

\usepackage{amsmath,amssymb,amsthm}
\usepackage{booktabs}
\usepackage{graphicx}
\usepackage{xcolor}
\usepackage[hidelinks]{hyperref}
\usepackage{enumitem}
\usepackage{multirow}
\usepackage{array}
\usepackage{tabularx}
\usepackage[margin=0.75in,columnsep=0.25in]{geometry}
\usepackage{microtype}
\usepackage{caption}
\newtheorem{theorem}{Theorem}
\newtheorem{lemma}[theorem]{Lemma}
\newtheorem{corollary}[theorem]{Corollary}
\newtheorem{proposition}[theorem]{Proposition}
\newtheorem{definition}[theorem]{Definition}
\newtheorem{remark}[theorem]{Remark}
\newtheorem{assumption}[theorem]{Assumption}

\title{Conflict-Free Replicated Data Types for Neural Network Model Merging:\\
A Two-Layer Architecture Enabling CRDT-Compliant Model Merging Across 26 Strategies}

\author{Ryan Gillespie\\
Independent researcher}

\date{}

\begin{document}

\twocolumn[%
\maketitle

\begin{abstract}
All 26 neural network merge strategies we tested---including weight averaging, SLERP, TIES, DARE, Fisher merging, and evolutionary approaches---fail the algebraic properties (commutativity, associativity, idempotency) required for conflict-free distributed operation~\cite{shapiro2011conflict}.
We prove that this failure is structural: normalisation-based merges \emph{cannot} simultaneously satisfy all three properties.
To resolve this, we present a two-layer architecture---\CMS{}---that wraps \emph{any} merge strategy in a CRDT-compliant (Conflict-Free Replicated Data Type) layer.
Layer~1 manages contributions via OR-Set CRDT semantics~\cite{shapiro2011comprehensive}, where the merge operation is set union---trivially commutative, associative, and idempotent.
Layer~2 applies merge strategies as deterministic pure functions over a canonically-ordered contribution set, with randomness seeded from the Merkle root~\cite{sanjuan2020merkle}.
We prove that this separation guarantees Strong Eventual Consistency~\cite{shapiro2011conflict}: all replicas receiving the same contributions compute identical merged models, regardless of message ordering.
Empirical validation spans three tiers: controlled $4\times 4$ tensors (104/104 tests pass), production-scale models up to 7.24\,B parameters (208 strategy-level tests, 43{,}368 layer-level property checks at capped tensor resolution), and multi-node convergence on synthetic tensors of representative shape (100 nodes, 20 orderings, gossip and partition healing), with CRDT overhead below 0.5\,ms.
Because the wrapper is transparent, downstream performance is identical by construction; we verified the implementation matches this construction byte-for-byte.
The reference implementation is available as \texttt{crdt-merge} v0.9.4.
\end{abstract}
\vspace{1.5em}
]

\section{Introduction}

As large-scale neural network models multiply, methods for combining independently fine-tuned models without retraining are essential~\cite{yang2024survey, wortsman2022model}.
Model merging---combining the parameters of two or more neural networks into a single model---offers a practical alternative to ensemble methods and multi-task training~\cite{wortsman2022model, ilharco2023editing}, with strategies ranging from weight averaging~\cite{wortsman2022model} and Task Arithmetic~\cite{ilharco2023editing} to TIES~\cite{yadav2023ties}, DARE~\cite{yu2024dare}, Fisher merging~\cite{matena2022merging}, SLERP~\cite{shoemake1985animating}, and evolutionary methods~\cite{akiba2025evolutionary}, supported by tools such as MergeKit~\cite{goddard2024mergekit}.

Despite this progress, \emph{no existing merge strategy satisfies the algebraic properties required for conflict-free distributed operation}.
Conflict-Free Replicated Data Types (CRDTs)~\cite{shapiro2011conflict, shapiro2011comprehensive} guarantee Strong Eventual Consistency (SEC) by requiring merge operations to be commutative, associative, and idempotent~\cite{preguica2018conflict, vogels2009eventually}.
As we demonstrate in Section~\ref{sec:problem}, all 26 strategies fail at least one property, with associativity as the universal failure point (25/26 fail).
This prevents decentralised model merging---where participants combine models peer-to-peer without a central coordinator---a capability relevant to multi-institutional collaboration (e.g., research consortia) where participants prefer not to rely on a single aggregation server~\cite{kairouz2021advances, bonawitz2019towards}; adversarial settings additionally require Byzantine fault tolerance (Section~\ref{sec:limitations}, L4).

\paragraph{Contributions.}
\begin{enumerate}[leftmargin=*]\tolerance=2000\emergencystretch=1em
\item A systematic algebraic audit of 26 neural network merge strategies revealing \emph{universal associativity failure} (25/26 strategies), together with a formal result (Proposition~\ref{prop:incompat}) proving that normalisation-based merges cannot satisfy all CRDT axioms simultaneously (Section~\ref{sec:problem}).

\item A two-layer architecture---\CMS{}---achieving CRDT-compliant merging across all 26 evaluated strategies by separating state management (Layer~1, OR-Set semantics) from strategy execution (Layer~2, deterministic pure functions).  Applying CRDTs directly to merge operations is impossible (Section~\ref{sec:problem}); the two-layer separation is what enables this generality.  While CRDT composition is a known pattern~\cite{shapiro2011comprehensive}, the domain-specific challenges---stochastic strategies requiring Merkle-root-derived seeding, order-dependent reductions requiring canonical hashing, and high-dimensional floating-point determinism---required careful engineering detailed in Section~\ref{sec:solution}.

\item Formal proofs that the architecture guarantees Strong Eventual Consistency for arbitrary merge strategies (Section~\ref{sec:proofs}), with explicit complexity bounds (Theorem~\ref{thm:complexity}).

\item Empirical validation at three tiers: controlled $4\times 4$ tensors (104/104 tests), production-scale models up to 7.24\,B parameters (208/208 strategy-level tests, 43{,}368 layer-level evaluations), and multi-node convergence under gossip and partition healing, with CRDT overhead below 0.5\,ms (Section~\ref{sec:experiments}).
\end{enumerate}

\section{Background}

\subsection{Conflict-Free Replicated Data Types}\label{sec:crdt_bg}

Conflict-Free Replicated Data Types (CRDTs) are data structures for replicated settings where concurrent updates must be merged without coordination~\cite{shapiro2011conflict,shapiro2011comprehensive}.
They guarantee Strong Eventual Consistency (SEC): any two replicas that have received the same set of updates converge to identical states~\cite{vogels2009eventually}, as exemplified by Amazon's Dynamo~\cite{decandia2007dynamo}.

\begin{definition}[State-based CRDT / CvRDT~\cite{shapiro2011conflict}]
A convergent replicated data type (CvRDT) is a tuple $(S, s_0, q, u, m)$ where $S$ is a join-semilattice of states with partial order $\leq$, $s_0$ is the initial state, $q$ is a query function, $u$ is an update function, and $m : S \times S \to S$ is a merge function satisfying:
\begin{align}
m(s_1, s_2) &= m(s_2, s_1) & &\text{(Comm.)} \label{eq:comm}\\
m(m(s_1, s_2), s_3) &= m(s_1, m(s_2, s_3)) & &\text{(Assoc.)} \label{eq:assoc}\\
m(s, s) &= s & &\text{(Idemp.)} \label{eq:idemp}
\end{align}
\end{definition}
These three laws ensure that the merge operation forms a join (least upper bound) on the semilattice, guaranteeing convergence regardless of message ordering or duplication~\cite{shapiro2011comprehensive, preguica2018conflict}.

\begin{definition}[OR-Set~\cite{shapiro2011conflict}]
An Observed-Remove Set (OR-Set) is a CRDT that supports both add and remove operations. Each element is tagged with a unique identifier upon insertion. A remove operation removes all observed tags for an element, allowing concurrent adds to survive. The merge operation is set union over the tagged elements minus the tombstoned tags.
\end{definition}
In the model merging context, an \emph{add} represents a participant contributing a fine-tuned model, while a \emph{remove} represents retraction.
Under OR-Set ``add-wins'' semantics, a concurrent add survives a concurrent remove---a natural default for collaborative model development where contributions should be preserved unless explicitly retracted~\cite{shapiro2011comprehensive} (see Section~\ref{sec:limitations} for tradeoffs).

\subsection{Neural Network Model Merging}

Model merging combines the parameters of two or more neural networks.
Let $\theta_{\mathrm{base}}$ denote base model parameters and $\tau_i = \theta_i - \theta_{\mathrm{base}}$ the task vector for fine-tune~$i$~\cite{ilharco2023editing}.
We evaluate 26 strategies spanning weight averaging~\cite{wortsman2022model}, task arithmetic~\cite{ilharco2023editing}, TIES~\cite{yadav2023ties}, DARE~\cite{yu2024dare}, Fisher merging~\cite{matena2022merging}, SLERP~\cite{shoemake1985animating}, evolutionary methods~\cite{akiba2025evolutionary}, and others (Appendix~\ref{app:strategies}).\footnote{\label{fn:provenance}Of 26 strategies, 15 have peer-reviewed publications; 11 are derived/community strategies from MergeKit~\cite{goddard2024mergekit}.  We include all 26 to cover the full strategy landscape used in practice.}
Yang et al.~\cite{yang2024survey} provide a comprehensive taxonomy.
Federated learning~\cite{mcmahan2017communication, kairouz2021advances}, the dominant framework for distributed model aggregation, relies on a central coordinator---creating a single point of failure~\cite{bonawitz2019towards}.
A decentralised alternative requires the convergence guarantees that our CRDT wrapper provides.

\section{The Problem: Why Direct CRDT on Tensors Fails}\label{sec:problem}

\subsection{Formal Analysis of CRDT Property Violations}

Let $f$ denote a binary merge function on tensors.
We require: $f(a,b)=f(b,a)$ (commutativity), $f(f(a,b),c)=f(a,f(b,c))$ (associativity), and $f(a,a)=a$ (idempotency).
We present two representative strategy analyses here; the remaining strategies (TIES, DARE, Fisher, and others) are analysed in Appendix~\ref{app:strategy_analysis}.

\paragraph{Weight Averaging.}
Define $f(a,b)=(a+b)/2$~\cite{wortsman2022model}.
Commutativity and idempotency hold trivially.
Associativity fails:
\begin{align}
f(f(a,b),c) &= \frac{a+b+2c}{4} \label{eq:avg_left}\\
f(a,f(b,c)) &= \frac{2a+b+c}{4} \label{eq:avg_right}
\end{align}

\paragraph{SLERP.}
For SLERP with parameter $t$~\cite{shoemake1985animating}, commutativity fails unless $t=0.5$ (swapping inputs with fixed $t$ changes the interpolation point).
Associativity fails because composing geodesic interpolations changes the reference great circle.
Idempotency holds: $\mathrm{SLERP}(v,v;t) = v$ for all $t$.\footnote{SLERP commutativity holds only at $t=0.5$; the CRDT architecture resolves this for all $t$ by canonically ordering inputs.}

\subsection{Incompatibility of Normalisation with CRDT Axioms}

The failures above follow a structural pattern.
We now show that normalisation---the operation at the heart of virtually every merge strategy---is incompatible with the full set of CRDT axioms.

\begin{definition}[Normalising Merge Function]
A binary merge function $f : \mathbb{R}^d \times \mathbb{R}^d \to \mathbb{R}^d$ is \emph{normalising} if there exists a function $g : \mathbb{R}^d \times \mathbb{R}^d \to \mathbb{R}^d$ such that $f(a,b) = g(a,b)/n(a,b)$ where $n(a,b) \geq 1$ depends on the number or magnitude of the inputs.
A merge function is \emph{manifold-projecting} if its output is constrained to a proper submanifold of $\mathbb{R}^d$ (e.g., the unit sphere).
A merge function is \emph{thresholding} if it applies an input-dependent cutoff that discards components below a threshold computed from the inputs.
\end{definition}

\begin{proposition}[Incompatibility of Normalisation with Associativity]\label{prop:incompat}
Let $f : \mathbb{R}^d \times \mathbb{R}^d \to \mathbb{R}^d$ be a normalising merge function with $f(a,b) = g(a,b)/2$ for symmetric $g$.
If $g$ is not degenerate (i.e., $g(a,b)/2 \neq a$ for generic $a,b$), then $f$ is not associative.\footnote{This covers weight averaging ($g(a,b)=a{+}b$, $f=(a{+}b)/2$); see Eqs.~\ref{eq:avg_left}--\ref{eq:avg_right}.}
More generally, if $f$ is manifold-projecting or thresholding, then $f$ is not associative except in degenerate cases.
\end{proposition}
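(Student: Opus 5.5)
The plan is to argue by contradiction from the halving structure $f(a,b)=g(a,b)/2$ with $g$ symmetric. The model case is $g(a,b)=a+b$. Here the two bracketings were computed in Eqs.~\ref{eq:avg_left}--\ref{eq:avg_right}. Their difference is
\begin{align}
f(f(a,b),c)-f(a,f(b,c)) = \frac{c-a}{4},
\end{align}
which is nonzero whenever $a\neq c$. So associativity fails on an open dense set of triples, which is the meaning of ``generic'' here.

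For a general symmetric nondegenerate $g$, I would specialise the associativity identity to the triple $(a,b,a)$. Associativity gives $f(f(a,b),a)=f(a,f(b,a))$. Commutativity comes for free from the symmetry of $g$, so the right-hand side equals $f(f(a,b),a)$, and this triple yields nothing. I would therefore use the triple $(a,a,b)$ instead. If $f$ is idempotent this gives $f(a,b)=f(f(a,a),b)=f(a,f(a,b))$. Iterating, $f(a,\cdot)$ fixes every point of the orbit $m_k=f(a,m_{k-1})$, and for a contraction-like halving map such as averaging this orbit converges to $a$. That would force $f(a,b)=a$, contradicting nondegeneracy. In the non-idempotent case I would compare $f(f(a,a),b)$ with $f(a,f(a,b))$ directly. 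For the linear family $g(a,b)=\lambda(a+b)$, the first coefficient of $a$ is $\lambda^2/2$ and the second is $\lambda/2+\lambda^2/4$; these agree only when $\lambda\in\{0,2\}$, and both of those are degenerate. So I would restrict the main claim to $g$ that are differentiable at the diagonal. I would then linearise: write $g(a,b)\approx \alpha a+\alpha b$ near $a=b$ using symmetry, and reduce to the linear computation above.

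The ``more generally'' clause I would treat separately for each class.

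For manifold-projecting $f$, the image lies on a proper submanifold $M$, such as the unit sphere. I would take the SLERP-style midpoint of three non-coplanar unit vectors. Associativity would force the left-nested midpoint $\mathrm{mid}(\mathrm{mid}(a,b),c)$ to equal the right-nested one, that is, two geodesic midpoints taken along different great circles. A direct computation with $a,b,c$ chosen as orthonormal basis vectors in $\mathbb{R}^3$ exhibits a difference, and analyticity of $f$ then extends this to generic inputs.

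For thresholding $f$, I would exhibit explicit $a,b,c$ where the cutoff computed from $(a,b)$ discards a coordinate that the cutoff computed from $(b,c)$ retains. The nested results then differ in that coordinate, exactly as in the TIES analysis of Appendix~\ref{app:strategy_analysis}. ``Degenerate'' is then defined to exclude thresholds that are constant or that never bind.

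The main obstacle is making ``generic'' and ``degenerate'' precise for an arbitrary symmetric $g$. As the linear family shows, the halving factor alone does not preclude associativity for every $g$; for example, $g(a,b)=2\max(a,b)$ coordinatewise is symmetric and associative. Such a max-type $g$ also has $g(a,b)/2\ne a$ for generic inputs, so my nondegeneracy condition does not rule it out. I would therefore build that exclusion into the hypotheses, requiring $g$ to be smooth at the diagonal or additive in its inputs, and prove the result under that assumption, with the counterexample offered as justification for the restriction.
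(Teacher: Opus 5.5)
For the concrete instances you follow the paper exactly. Your $(c-a)/4$ is the difference of Eqs.~\ref{eq:avg_left}--\ref{eq:avg_right}, and your SLERP-on-basis-vectors and explicit thresholding triples are the paper's counterexamples. Your analyticity remark usefully upgrades the SLERP counterexample to generic triples. For the general first clause, the paper argues only that the intermediate inputs to the outer $f$ differ between the two bracketings, ``hence'' the outputs differ. That is a non sequitur, and you are right that the clause cannot be proved as written.

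Your repair, however, has a genuine gap. Your own case $\lambda=2$ is $f(a,b)=a+b$. Here $g(a,b)=2(a+b)$ is symmetric, smooth and additive, and $f(a,b)=a$ only on the null set $\{b=0\}$, so $f$ is nondegenerate in the paper's sense. Yet it is associative, and so is $\lambda=0$ ($f\equiv 0$). Nothing you state makes these ``degenerate'', so restricting to $g$ smooth at the diagonal or additive still leaves a false claim. Your $\max$ example is actually the weaker one: $\max(a,b)=a$ on the nonempty open set $\{b<a\}$ (coordinatewise), so under the usual meaning of ``generic'' it is already excluded. Unscaled addition is the clean counterexample, and it is consistent with the paper's own report that Task Arithmetic is associative.

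The hypothesis that does the work is idempotency, $g(a,a)=2a$. This is what ``normalising'' is meant to convey, and both $a+b$ and $0$ violate it. Your linearisation silently needs it, for two reasons:
\begin{itemize}
\item If $f(p,p)=q\neq p$, the outer call in $f(f(a,b),c)$ is evaluated near $(q,p)$, off the diagonal, so a first-order expansion at the diagonal controls nothing.
\item Even when $f(p,p)=p$, the common partial Jacobian $K$ is a matrix, not a scalar $\alpha$, and the linearised condition $K^2=K$ admits projections.
\end{itemize}

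With idempotency you need neither the contraction/orbit step (an unstated hypothesis) nor the linear family. Your triple $(a,a,b)$ gives $f(p,f(p,b))=f(p,b)$. Suppose $f$ is differentiable at one diagonal point $(p,p)$. Differentiating $f(x,x)=x$ and $f(a,b)=f(b,a)$ gives $\partial_1 f(p,p)=\partial_2 f(p,p)=\tfrac12 I$. Differentiating the identity in $b$ at $b=p$ then gives $\bigl(\tfrac12 I\bigr)^2=\tfrac12 I$, a contradiction. So no symmetric, idempotent $f$ differentiable at a diagonal point is associative. This covers weight averaging and, on tangent spaces, the SLERP midpoint on the sphere. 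It also correctly excludes $\max$ (not differentiable on the diagonal) and $a+b$ (not idempotent).

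For the ``more generally'' clause, you and the paper both establish only the exhibited instances. Without a definition of ``degenerate'' the clause is false: $f(a,b)=Pa+Pb$, with $P$ a coordinate projection, is associative and has image in a hyperplane.
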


\begin{proof}
We prove the result for count-based normalisation, then provide concrete counterexamples for projection and thresholding.

Suppose $f$ normalises its output by dividing by the number of inputs being combined.
For a pairwise merge, $f(a,b) = g(a,b)/2$ for some symmetric function $g$ (to preserve commutativity).
Consider associativity.
When we compose pairwise merges, the left-association computes $f(f(a,b),c) = g\!\left(\frac{g(a,b)}{2}, c\right)/2$, which applies normalisation \emph{twice}, each time with a divisor of~2, on different intermediate values.
The right-association computes $f(a,f(b,c)) = g\!\left(a, \frac{g(b,c)}{2}\right)/2$.
Because $g(a,b)/2 \neq a$ in general (unless $g$ is degenerate), the intermediate values fed into the outer application of $f$ differ between left- and right-association.
Hence $f(f(a,b),c) \neq f(a,f(b,c))$ for generic $a,b,c$, violating associativity.  (The weight averaging counterexample of Eqs.~\ref{eq:avg_left}--\ref{eq:avg_right} provides the concrete instance.)

For manifold projection (e.g., SLERP at $t=0.5$), we exhibit a concrete counterexample.
Let $v_1 = (1,0,0)$, $v_2 = (0,1,0)$, $v_3 = (0,0,1)$ on $S^2$.
Left-association computes $m_{12} = \mathrm{SLERP}(v_1, v_2; 0.5)$ $= \frac{1}{\sqrt{2}}(1,1,0)$, then $\mathrm{SLERP}(m_{12}, v_3; 0.5)$, which lies on the great circle from $\frac{1}{\sqrt{2}}(1,1,0)$ to $(0,0,1)$.
Right-association computes $m_{23} = \mathrm{SLERP}(v_2, v_3; 0.5)$ $= \frac{1}{\sqrt{2}}(0,1,1)$, then $\mathrm{SLERP}(v_1, m_{23}; 0.5)$, which lies on a different great circle from $(1,0,0)$ to $\frac{1}{\sqrt{2}}(0,1,1)$.
The two results are distinct unit vectors ($\approx(0.500, 0.500, 0.707)$ vs.\ $\approx(0.707, 0.500, 0.500)$, evaluated numerically), confirming that SLERP is not associative.

For thresholding (e.g., TIES trimming at 20\%), let $a = (10,\, 1,\, 0.1)$, $b = (0.1,\, 10,\, 1)$, $c = (1,\, 0.1,\, 10)$ with a 20\% trim (keeping the top 80\% of magnitudes, i.e., dropping 1 of 3 components per vector).
Left-association: trimming $a$ and $b$ yields $a' = (10, 1, 0)$ and $b' = (0, 10, 1)$; averaging gives $m_{ab} = (5,\, 5.5,\, 0.5)$.
Then trimming $m_{ab}$ and $c$ yields $(5,\, 5.5,\, 0)$ and $(1,\, 0,\, 10)$; averaging gives the left result $\approx (3.0,\, 2.75,\, 5.0)$.
Right-association: trimming $b$ and $c$ yields $(0, 10, 1)$ and $(1, 0, 10)$; averaging gives $m_{bc} = (0.5,\, 5.0,\, 5.5)$.
Then trimming $a$ and $m_{bc}$ yields $(10, 1, 0)$ and $(0, 5, 5.5)$; averaging gives the right result $\approx (5.0,\, 3.0,\, 2.75)$.
The two results differ ($\neq$), confirming non-associativity.

In all three cases, associativity is violated---confirming that normalisation, projection, and thresholding each independently break it.
Since normalisation (in one of these forms) is a component of 25 of the 26 strategies we evaluated, this provides a structural explanation for the empirical observation that 0/26 strategies achieve system-level CRDT compliance on controlled $4\times 4$ tensors (Table~\ref{tab:phase1_controlled}).
\end{proof}

We present Proposition~\ref{prop:incompat} as a structural observation that explains the empirical finding of universal associativity failure (Table~\ref{tab:phase1_controlled}), rather than as a comprehensive impossibility theorem.
A fully general characterisation of which merge functions can satisfy all three CRDT axioms simultaneously is an interesting open question.

\subsection{Summary of Controlled Empirical Results}

Controlled testing on $4\times 4$ tensors (Table~\ref{tab:phase1_controlled}, Appendix~\ref{app:controlled}) confirms: 21/26 strategies are commutative, 14/26 idempotent, but only 1/26 (Task Arithmetic) is associative---and it fails idempotency.
Zero out of 26 satisfy all three CRDT requirements.
Associativity is the universal bottleneck (25/26 fail), structurally explained by Proposition~\ref{prop:incompat}.
Section~\ref{sec:production_scale} confirms this pattern persists at production scale.

\section{The Solution: Two-Layer Architecture}\label{sec:solution}

CRDT properties need not hold for tensor merge operations---only for the \emph{state management layer} that determines which contributions are included.
The actual merge strategy can be any deterministic function applied to a canonically-ordered set.

\subsection{Architecture Overview}

The \CMS{} architecture comprises two layers:
\begin{itemize}[leftmargin=*]
\item \textbf{Layer~1 (CRDT State Management):} An OR-Set CRDT tracking model contributions.  The merge operation is set union---trivially commutative, associative, and idempotent~\cite{shapiro2011conflict}.  Version vectors provide causal ordering~\cite{lamport1978time}; Merkle hash trees provide integrity verification~\cite{sanjuan2020merkle}.

\item \textbf{Layer~2 (Deterministic Strategy Execution):} Given the converged contribution set, applies a merge strategy as a deterministic pure function.  Canonical ordering (by content hash) and seeded randomness (from Merkle root) ensure identical results on all replicas.
\end{itemize}
Layer~1 handles \emph{what} to merge; Layer~2 handles \emph{how}.
Since Layer~2 is a pure function of Layer~1's converged state, the system converges.
A worked data-flow example is provided in Appendix~\ref{app:data_flow}.

\subsection{Layer 1: CRDT State Management}

\begin{definition}[\CMS{}]\label{def:state}
A \CMS{} $S$ is a tuple $(A, R, V, H)$ where:
\begin{itemize}[leftmargin=*]
\item $A$ is the set of add entries---$(e,t,n)$ triples where $e$ is the model contribution, $t$ is a unique tag, and $n$ is the originating node;
\item $R$ is the set of remove entries: tags that have been removed;
\item $V$ is a version vector mapping node identifiers to logical timestamps~\cite{lamport1978time};
\item $H$ is a Merkle hash tree over the visible elements~\cite{sanjuan2020merkle}.
\end{itemize}
\end{definition}

The visible set and merge operation are:
\begin{equation}\label{eq:visible}
\mathrm{Visible}(S) = \{e \mid \exists\, (e,t,n) \in A \text{ s.t.\ } t \notin R\}
\end{equation}
\begin{equation}\label{eq:merge}
\mathrm{merge}(S_1, S_2) = (A_1 \cup A_2,\; R_1 \cup R_2,\; \max(V_1, V_2),\; H')
\end{equation}
where $\max(V_1,V_2)$ is the component-wise maximum and $H'$ is recomputed from the resulting visible set.
In production, full-state merge should be replaced with delta-state propagation~\cite{almeida2018delta}; see Section~\ref{sec:limitations}.

Each contribution is identified by its SHA-256 hash~\cite{sanjuan2020merkle}, providing both deduplication and canonical ordering (a deterministic total order independent of insertion order or node identity).
The Merkle tree over the visible set enables $O(\log n)$ convergence verification, efficient delta synchronisation, and provides a deterministic root hash for Layer~2's randomness requirements.
Version vectors~\cite{lamport1978time} track causal history, enabling detection of concurrent operations for OR-Set conflict resolution.
Causal delivery is \emph{not} required for correctness: the merge operation (Eq.~\ref{eq:merge}) is commutative, associative, and idempotent, so messages may arrive in any order, be duplicated, or be delayed without affecting the converged state~\cite{shapiro2011conflict}.
Version vectors serve an \emph{optimisation} role---identifying which updates a peer already has to avoid redundant retransmission---not a correctness role.

\subsection{Layer 2: Deterministic Strategy Execution}\label{sec:layer2}

\begin{definition}[Resolve Function]\label{def:resolve}
The resolve function $\mathcal{R} : 2^{\mathcal{M}} \times \Sigma \times \mathcal{H} \to \mathcal{M}$ takes a non-empty set of model contributions ($|C| \geq 1$), a strategy identifier $\sigma \in \Sigma$, and the Merkle root hash $h \in \mathcal{H}$, and returns a merged model:
\begin{equation}\label{eq:resolve}
\mathcal{R}(C, \sigma, h) = \sigma(\mathrm{sort}_{\mathrm{hash}}(C),\; \mathrm{seed}(h))
\end{equation}
\end{definition}

Three mechanisms ensure determinism: (1)~\emph{canonical ordering} via SHA-256 content hashes defining a total order identical on all replicas; (2)~\emph{seeded randomness} derived from the Merkle root, ensuring identical seeds from identical contribution sets~\cite{sanjuan2020merkle}; and (3)~the \emph{pure function guarantee} enforced by the API contract.

\begin{remark}[N-way Generalisation]\label{rem:nway}
Strategies with natural $n$-ary forms (e.g., weight averaging, TIES) use them directly.
Binary-only strategies (e.g., SLERP) are reduced via sequential fold over the canonical order: $\mathrm{fold}(\sigma, [c_1, \ldots, c_k])$.
The canonical ordering ensures this fold is identical on all replicas.
Fold-based reduction introduces a weighting imbalance: for SLERP with $k=3$ and $t=0.5$, $c_3$ receives 50\% weight versus 25\% each for $c_1, c_2$~\cite{shoemake1985animating}.
This does not affect CRDT compliance but may affect merge quality: for $k$ contributions, the last element receives weight $t$ while the first receives $(1{-}t)^{k-1}$---exponential decay.
Strategies with native $n$-ary forms avoid this issue.
For binary-only strategies in large consortia, a balanced binary-tree reduction (depth $\lceil\log_2 k\rceil$) would equalise influence at the cost of a different---but still deterministic---reduction order.
\end{remark}

\section{Mathematical Proof of CRDT Compliance}\label{sec:proofs}

Let $\mathcal{S}$ denote the set of all \CMS{} instances.
For $S \in \mathcal{S}$, let $\mathrm{Visible}(S)$ denote the visible set (Eq.~\ref{eq:visible}) and $\mathcal{R}(S) = \mathcal{R}(\mathrm{Visible}(S), \sigma, h(S))$ the resolved value.
Let $\sqcup : \mathcal{S} \times \mathcal{S} \to \mathcal{S}$ denote the merge operation (Eq.~\ref{eq:merge}).

\begin{theorem}[CRDT Compliance]\label{thm:crdt}
The merge operation $\sqcup$ on \CMS{} satisfies commutativity, associativity, and idempotency.
Moreover, $(\mathcal{S}, \sqsubseteq)$ is a join-semilattice with $\sqcup$ as the least upper bound, and \CMS{} is a CvRDT~\cite{shapiro2011conflict}.
\end{theorem}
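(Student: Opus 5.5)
The plan is to reduce everything to the componentwise structure of Eq.~\ref{eq:merge}. A state is $S=(A,R,V,H)$, but $H$ is not independent data: it is a deterministic function of $\mathrm{Visible}(S)$, and $\mathrm{Visible}(S)$ is itself determined by $(A,R)$ through Eq.~\ref{eq:visible}. I will therefore first identify a \CMS{} with its \emph{core} $(A,R,V)$, writing $H=\mathrm{Merkle}(\mathrm{Visible}(A,R))$. Here I take $\mathcal{S}$ to consist of states whose $H$ component is well formed in this sense, which every state reachable from $s_0$ via updates and merges satisfies.

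With that identification, $\sqcup$ acts componentwise on the core. It takes $A_1\cup A_2$ in the powerset lattice of tagged add entries, $R_1\cup R_2$ in the powerset lattice of tags, and $\max(V_1,V_2)$ in the product lattice $\mathbb{N}^{\text{nodes}}$ under the pointwise order. Each component operation is commutative, associative and idempotent. For union these are standard set identities. For pointwise max they follow from the corresponding facts for $\max$ on the totally ordered $\mathbb{N}$. A product of such operations inherits all three laws. The recomputed $H'$ is a function of the merged $(A,R)$ only, so equal cores yield equal Merkle roots. This gives Eqs.~\ref{eq:comm}--\ref{eq:idemp} for $\sqcup$. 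For idempotency in particular, $\mathrm{merge}(S,S)=(A,R,V,\mathrm{Merkle}(\mathrm{Visible}(S)))=S$, using the well-formedness of $H$.

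For the semilattice claim, I define $S_1\sqsubseteq S_2$ iff $A_1\subseteq A_2$, $R_1\subseteq R_2$, and $V_1\le V_2$ pointwise. Equivalently, $S_1\sqsubseteq S_2$ iff $S_1\sqcup S_2=S_2$, and this equivalence is the standard bridge between an idempotent commutative semigroup and a join-semilattice. Reflexivity, antisymmetry and transitivity are inherited from the component orders; antisymmetry uses that $H$ is determined by the core. To show $\sqcup$ is the least upper bound, I note that $S_1\sqcup S_2$ is an upper bound of both $S_1$ and $S_2$ by the componentwise inclusions. If $T$ is any upper bound, then $A_1\cup A_2\subseteq A_T$, $R_1\cup R_2\subseteq R_T$ and $\max(V_1,V_2)\le V_T$, so $S_1\sqcup S_2\sqsubseteq T$. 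To conclude that \CMS{} is a CvRDT, I take $s_0=(\emptyset,\emptyset,0,\mathrm{Merkle}(\emptyset))$ and $q=\mathcal{R}$. The remaining obligation in the sense of \cite{shapiro2011conflict} is that each update is inflationary. An add inserts a fresh triple into $A$ and increments $V[n]$; a remove inserts observed tags into $R$. Both only enlarge components, so $S\sqsubseteq u(S)$.

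The main obstacle is the $H$ component. It is the one part of the tuple that is not itself joined, and $\mathrm{Visible}$ is \emph{not} monotone in $\sqsubseteq$, because removals shrink it. I therefore must not argue that $H$ lives in a lattice. Instead I will make it explicit that $H$ is a derived cache: either I quotient it out or restrict $\mathcal{S}$ to well-formed states. I will then check that the recomputation in Eq.~\ref{eq:merge} is applied to the merged $(A,R)$ and not to the inputs' hashes. A secondary point to state carefully is tag uniqueness. Tags are unique per insertion, so a tag never appears in $A$ with two different contributions. Under this convention, $A$ grows purely by union and the OR-Set semantics is consistent.
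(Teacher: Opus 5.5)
Your proposal is correct and follows the same route as the paper's Appendix~\ref{app:crdt_proof}. Both reduce commutativity, associativity and idempotency to set union on $A,R$ and pointwise max on $V$, with the recomputed $H$ agreeing because it is a function of the same visible set, and both use the componentwise order $\sqsubseteq$ with the same least-upper-bound argument. Your extra care makes explicit several points the paper's proof leaves implicit: treating $H$ as a derived cache (restricting to well-formed states) so that $S\sqcup S=S$ and antisymmetry genuinely hold, noting that $\mathrm{Visible}$ is not monotone so $H$ is not itself a lattice component, and checking that updates are inflationary for the CvRDT claim.
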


\begin{proof}
Immediate from the OR-Set CvRDT result~\cite{shapiro2011conflict, shapiro2011comprehensive}: our merge composes set union on $A$ and $R$, component-wise max on $V$, and deterministic recomputation of $H$---all semilattice operations.
The full verification is in Appendix~\ref{app:crdt_proof}.
\end{proof}

We state three formal preconditions on merge strategies and the computational environment.

\begin{assumption}[Strategy Purity]\label{ass:purity}
A merge strategy $\sigma$ is a \emph{pure function}: for all inputs $(C, s)$, $\sigma(C,s)$ is uniquely determined by $C$ and $s$, with no dependence on external state or non-deterministic operations beyond the provided seed~$s$.
\end{assumption}

\begin{assumption}[Computational Determinism]\label{ass:determinism}
All replicas execute $\sigma$ using identical ISA, library versions, and IEEE~754 rounding mode (round-to-nearest-even), or use a fixed-precision format guaranteeing bitwise reproducibility.
\end{assumption}

\begin{assumption}[Collision Resistance]\label{ass:collision}
SHA-256 is collision-resistant: for any set of contributions of size up to $2^{64}$, the probability of any collision is at most $2^{-128}$ (the birthday bound).
\end{assumption}

Under these assumptions, we establish convergence through the following lemma and theorem (individual lemma proofs in Appendix~\ref{app:determinism_lemmas}).

\begin{lemma}[Determinism of Hashing, Ordering, and Seeding]\label{lem:determinism}
If $\mathrm{Visible}(S_1) = \mathrm{Visible}(S_2)$, then under Assumption~\ref{ass:collision}: (1)~the hash sets are identical with distinct mappings; (2)~the canonical orderings $\mathrm{sort}_{\mathrm{hash}}$ are equal; (3)~the Merkle roots and derived seeds are equal.
\end{lemma}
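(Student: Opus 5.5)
The plan is to treat the three claims in order, each resting on the previous one. Throughout we write $V := \mathrm{Visible}(S_1) = \mathrm{Visible}(S_2)$. We view hashing as a fixed deterministic function $\mathrm{SHA256} : \mathcal{M} \to \{0,1\}^{256}$ that depends only on the serialised bytes of a contribution. It does not depend on tags, originating nodes, insertion order, or the version vector.

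\emph{Step (1): hash sets.} Since the hash is a function of content alone, the image $\mathrm{SHA256}(V)$ is determined by $V$, so the two hash sets coincide. For the ``distinct mappings'' part, note that the restriction of the hash to $V$ is injective except with probability at most $2^{-128}$ whenever $|V| \le 2^{64}$, by Assumption~\ref{ass:collision}. Conditioned on the complementary event, the map $e \mapsto \mathrm{SHA256}(e)$ is a bijection from $V$ onto its hash set, and this bijection is the same map on both replicas. Every subsequent claim then holds with probability at least $1-2^{-128}$, and we state this qualification once.

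\emph{Step (2): canonical ordering.} The lexicographic order on $\{0,1\}^{256}$ is a total order. Pulling it back along an injective map gives a strict total order on $V$. Because both the underlying set and the injective map are identical on the two replicas, $\mathrm{sort}_{\mathrm{hash}}(V)$ is the unique increasing enumeration of $V$ under this order, so it is the same list on both. The key point is that a total order on a finite set has exactly one sorted enumeration. It follows that any tie-breaking rule, or any nondeterminism in the choice of sorting algorithm, is irrelevant. This is precisely where injectivity is needed: a collision would produce a tie, and a tie might be broken by insertion order.

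\emph{Step (3): Merkle root and seed.} The Merkle tree $H$ is recomputed, as in Eq.~\ref{eq:merge}, from the visible set taken in canonical order. The leaves are the hashes in the order fixed by Step (2), and each internal node is a fixed deterministic hash of its children under a fixed padding and pairing convention. A straightforward induction on tree height then shows that equal leaf sequences yield equal nodes at every level, and in particular equal roots $h(S_1) = h(S_2)$. The seed is defined by $\mathrm{seed}(h)$, a deterministic function such as truncating the root, so the seeds are equal as well.

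The main obstacle is Step (3), because it depends on an implementation fact that the paper has stated only implicitly. We must establish that $H$ is a function of $\mathrm{Visible}(S)$ alone, and not of $A$ or $R$. The concern is that tags or tombstones might leak into the leaves, or that a tree built incrementally might depend on insertion history. To address this, we will invoke the clause in Eq.~\ref{eq:merge} stating that $H'$ is ``recomputed from the resulting visible set'', and we will fix the tree-construction convention explicitly so that the induction goes through.
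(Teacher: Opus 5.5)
Your proposal is correct and follows essentially the same three-step chain as the paper's proof in Appendix~\ref{app:determinism_lemmas}. First, content-only determinism of SHA-256 gives equal hash sets, which are injective by Assumption~\ref{ass:collision}. Second, sorting a finite set under a total order gives a unique sequence. Third, equal leaves propagate up the Merkle tree to equal roots and hence equal seeds. Your extra details make explicit what the paper simply asserts, without changing the argument: that injectivity is exactly what rules out insertion-order tie-breaking, the induction on tree height, and the check that $H$ depends only on $\mathrm{Visible}(S)$, which the paper takes directly from Definition~\ref{def:state} and Eq.~\ref{eq:merge}.
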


\begin{theorem}[Convergence of Resolved Values]\label{thm:convergence}
If $\mathrm{Visible}(S_1) = \mathrm{Visible}(S_2)$ and both use strategy $\sigma$ under Assumptions~\ref{ass:purity}--\ref{ass:collision}, then $\mathcal{R}(S_1) = \mathcal{R}(S_2)$.
\end{theorem}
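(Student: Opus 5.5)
The argument unfolds the definition of the resolve function and shows that each of its arguments agrees on the two replicas. By Definition~\ref{def:resolve} and the convention $\mathcal{R}(S) = \mathcal{R}(\mathrm{Visible}(S), \sigma, h(S))$,
\[
\mathcal{R}(S_i) = \sigma\bigl(\mathrm{sort}_{\mathrm{hash}}(\mathrm{Visible}(S_i)),\ \mathrm{seed}(h(S_i))\bigr), \qquad i \in \{1,2\}.
\]
It therefore suffices to establish three equalities:
\begin{enumerate}[leftmargin=*]
\item the strategy identifiers are equal, which holds by hypothesis;
\item the canonically ordered input sequences are equal;
\item the seeds are equal.
\end{enumerate}
Once these hold, $\sigma$ is applied to literally the same arguments on both replicas.

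\textbf{Step 1 (inputs).} Set $C := \mathrm{Visible}(S_1) = \mathrm{Visible}(S_2)$. Invoke Lemma~\ref{lem:determinism}, which relies on Assumption~\ref{ass:collision}:
\begin{itemize}[leftmargin=*]
\item Part (1) guarantees that the SHA-256 map is injective on $C$, so hashes induce a genuine total order on $C$ rather than a preorder with ties. Without this, $\mathrm{sort}_{\mathrm{hash}}$ could depend on a tie-breaking rule tied to insertion order.
\item Part (2) then gives $\mathrm{sort}_{\mathrm{hash}}(\mathrm{Visible}(S_1)) = \mathrm{sort}_{\mathrm{hash}}(\mathrm{Visible}(S_2))$ as sequences.
\item Part (3) gives $h(S_1) = h(S_2)$ and hence $\mathrm{seed}(h(S_1)) = \mathrm{seed}(h(S_2))$, since $\mathrm{seed}$ is a fixed deterministic function.
\end{itemize}
It should be stressed here that $H$ is recomputed from the visible set alone, by Eq.~\ref{eq:merge}. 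Replicas that reached the same visible set by different histories, for instance with different tombstones in $R$ or different version vectors $V$, therefore still share the Merkle root.

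\textbf{Step 2 (evaluation).} By Assumption~\ref{ass:purity}, $\sigma(C', s)$ is uniquely determined by $(C', s)$, so equal arguments yield equal outputs in the mathematical sense. The mechanisms covered by purity are internal randomness (drawn only from $s$, as in DARE or evolutionary strategies) and fold-based reductions for binary-only strategies, as in Remark~\ref{rem:nway}. The fold order is the canonical sequence from Step~1 and so is identical on both replicas.

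\textbf{Main obstacle.} The hardest part is bridging the gap between $\sigma$ as an abstract pure function and $\sigma$ as floating-point code executed on two machines. Purity alone does not rule out bitwise differences from, for example, non-associative floating-point summation or different BLAS kernels. Here I would invoke Assumption~\ref{ass:determinism}: identical ISA, library versions and rounding mode make the concrete execution of $\sigma$ a deterministic function of its bit-level inputs, so the outputs agree bitwise.

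A secondary subtlety is that Assumption~\ref{ass:collision} is probabilistic. The conclusion should therefore be stated as holding except with probability at most $2^{-128}$, or deterministically conditional on the absence of collisions, which is the conditioning I would use when applying Lemma~\ref{lem:determinism}. Combining the two steps gives $\mathcal{R}(S_1) = \mathcal{R}(S_2)$.
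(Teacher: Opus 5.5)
Your proposal is correct and matches the paper's proof step for step. Like the paper, you unfold $\mathcal{R}(S_i)$ via Definition~\ref{def:resolve}, use Lemma~\ref{lem:determinism} to equate the canonical orderings and seeds, and then combine purity (Assumption~\ref{ass:purity}) with computational determinism (Assumption~\ref{ass:determinism}) to conclude that the outputs coincide. Your added remarks are sound refinements of points the paper leaves implicit in its ``with overwhelming probability'' phrasing: that $H$ depends only on the visible set, so differing $R$ or $V$ are irrelevant, and that the conclusion holds conditionally on no SHA-256 collision (i.e., except with probability at most $2^{-128}$).
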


\begin{proof}
By hypothesis, $\mathrm{Visible}(S_1) = \mathrm{Visible}(S_2)$.
By Lemma~\ref{lem:determinism}, canonical orderings and seeds are equal.
Since $\sigma$ is a pure function (Assumption~\ref{ass:purity}) receiving identical inputs (ordered contributions and seed) under identical computation (Assumption~\ref{ass:determinism}), outputs are identical:
\[
\sigma\!\bigl(\mathrm{sort}_{\mathrm{hash}}(\mathrm{Visible}(S_i)),\; \mathrm{seed}(h(S_i))\bigr)
\]
is the same for $i=1,2$. Therefore $\mathcal{R}(S_1) = \mathcal{R}(S_2)$.
\end{proof}

\begin{corollary}[Universal CRDT-Compliant Merging]\label{cor:universal}
Every strategy $\sigma$ satisfying Assumptions~\ref{ass:purity}--\ref{ass:collision} can be used for CRDT-compliant merging through \CMS{}, regardless of its own algebraic properties.
\end{corollary}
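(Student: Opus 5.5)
The plan is to obtain Corollary~\ref{cor:universal} by combining Theorem~\ref{thm:crdt} with Theorem~\ref{thm:convergence}. First I will fix what ``CRDT-compliant merging'' means. The replicated state is the \CMS{} $S$. Replicas exchange states only through $\sqcup$, and the merged model is the query value $q(S) := \mathcal{R}(S) = \mathcal{R}(\mathrm{Visible}(S), \sigma, h(S))$. Compliance then amounts to two requirements: (i) the tuple $(\mathcal{S}, s_0, q, u, \sqcup)$ is a CvRDT, and (ii) Strong Eventual Consistency holds for the \emph{queried} value. That is, any two replicas that have delivered the same set of updates return identical merged models.

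Step (i) follows directly from Theorem~\ref{thm:crdt}. The laws (Comm.), (Assoc.) and (Idemp.) concern only $\sqcup$, which never invokes $\sigma$. So they hold for every $\sigma$, whatever its algebraic behaviour, and this is the sense of ``regardless of its own algebraic properties.'' I will also note that the update operations (add a tagged contribution, tombstone a tag, increment $V$) are inflationary in $\sqsubseteq$. This is the standard OR-Set fact and is needed for the Shapiro et al.\ convergence theorem.

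Step (ii) first uses the standard CvRDT convergence result~\cite{shapiro2011conflict}. Two replicas that have received the same set of updates, in any order and with any duplication, reach states $S_1, S_2$ that are both equal to the join of those updates. Hence $S_1 = S_2$, and in particular $\mathrm{Visible}(S_1) = \mathrm{Visible}(S_2)$. Applying Theorem~\ref{thm:convergence} under Assumptions~\ref{ass:purity}--\ref{ass:collision} then gives $\mathcal{R}(S_1) = \mathcal{R}(S_2)$. The only quantifier to handle is the fixed strategy identifier $\sigma$. I will treat $\sigma$ as part of the replica configuration, shared by all replicas, rather than part of the replicated state.

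The main obstacle is the edge cases where $q$ is not literally a function of the converged lattice state. The first is the empty visible set: $\mathcal{R}$ is defined only for $|C|\ge 1$, so I will define $q(S)$ as a designated value (e.g.\ $\theta_{\mathrm{base}}$ or ``undefined'') whenever $\mathrm{Visible}(S)=\emptyset$. This value is trivially common to all replicas. The second is the probabilistic caveat in Assumption~\ref{ass:collision}. Identical visible sets yield identical hashes deterministically, so collisions can only make distinct contributions share a sort key. I will argue that SEC holds outright when hashes are distinct, and that the failure event has probability at most $2^{-128}$. The corollary is therefore stated ``with overwhelming probability,'' and I will make that explicit.
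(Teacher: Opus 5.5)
Your proposal is correct and follows the paper's own proof. Theorem~\ref{thm:crdt} gives Layer~1 convergence to identical states, hence identical visible sets, and Theorem~\ref{thm:convergence} then gives identical resolved values, i.e.\ Strong Eventual Consistency for the queried merged model; your additional care (inflationary updates, a designated value for the empty visible set where $\mathcal{R}$ is undefined, and the ``with overwhelming probability'' qualifier from Assumption~\ref{ass:collision}) fills in details the paper's three-line argument leaves implicit rather than changing the route.
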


\begin{proof}
By Theorem~\ref{thm:crdt}, Layer~1 guarantees CRDT properties and convergence to identical visible sets.
By Theorem~\ref{thm:convergence}, identical visible sets produce identical resolved values.
The composed system satisfies Strong Eventual Consistency~\cite{shapiro2011conflict}.
\end{proof}

\begin{theorem}[Complexity Bounds]\label{thm:complexity}
For $k$ contributions of $p$ parameters each:
$\mathrm{merge}()$ runs in $O(|A_1|+|A_2|)$ (independent of $p$);
$\mathrm{add}()$ in $O(p)$ (SHA-256 hashing);
$\mathrm{resolve}()$ in $O(k\log k + T_\sigma(k,p))$ where $T_\sigma$ is the strategy cost.
The CRDT overhead is $O(k\log k)$ time and $O(k)$ space, independent of model size~$p$.
\end{theorem}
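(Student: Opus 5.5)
The bounds can be read off the definitions of the three operations. For each operation I identify which data structures it touches and what the per-element cost is under a reasonable implementation model. Set membership, union and max use hash maps keyed by tag or node identifier, with $O(1)$ expected insert and lookup. Contributions are stored by reference, so comparing or copying an entry $(e,t,n)$ costs $O(1)$ and not $O(p)$. SHA-256 runs in time linear in its input length.

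For $\mathrm{merge}()$ I use Eq.~\ref{eq:merge} directly. The union $A_1\cup A_2$ inserts every entry of one side into a copy of the other, with tag-keyed deduplication; this costs $O(|A_1|+|A_2|)$. The union $R_1\cup R_2$ is bounded by the same quantity, since every tombstoned tag comes from some add entry, so $|R_i|\le|A_i|$. The component-wise max on $V$ costs $O(\#\text{nodes})$, and each node with a nonzero entry has originated at least one add, so this is also absorbed. The recomputation of $H'$ is the step to watch. I would argue that it works only on the already-computed 32-byte leaf hashes of the visible elements. Building a Merkle tree over at most $|A_1|+|A_2|$ leaves then costs linear time (or $O(m\log m)$ if the leaves are sorted first). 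No tensor data is read, so the cost does not depend on $p$.

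For $\mathrm{add}()$ the cost is dominated by hashing the $p$ parameters, which is $\Theta(p)$. Creating the tag, bumping the version vector and inserting the leaf are $O(1)$ or $O(\log k)$ each, and these terms are absorbed. For $\mathrm{resolve}()$ I follow Eq.~\ref{eq:resolve}. Computing $\mathrm{Visible}(S)$ is a linear scan. $\mathrm{sort}_{\mathrm{hash}}$ is a comparison sort of $k$ fixed-length digests, which costs $O(k\log k)$. Deriving the seed from the cached root $h$ is $O(1)$. The call to $\sigma$ costs $T_\sigma(k,p)$ by definition. Adding these gives $O(k\log k+T_\sigma(k,p))$.

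The overhead claim follows by removing the strategy cost and the unavoidable one-time hashing at ingestion. What remains is sorting, Merkle construction and the set and vector operations, all over $k$ fixed-size digests and tags, so the time is $O(k\log k)$. Space is $O(k)$ for the tags, tombstones and digests, plus $O(k)$ internal nodes of the Merkle tree. The main obstacle is making the $p$-independence of merge and resolve honest. The bound needs two things. First, hashes must be cached at add time and never recomputed. Second, the visible-set bookkeeping must stay proportional to $k$; this requires $|A|$ and $|R|$ to be $O(k)$, treating tombstones as bounded by the number of contributions ever added. I would state both as explicit implementation conventions within the proof.
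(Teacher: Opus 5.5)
Your accounting follows the route the paper intends. The paper gives no separate proof of Theorem~\ref{thm:complexity}, only the parentheticals in the statement and the breakdown in Section~\ref{sec:performance}: set operations for merge, SHA-256 for add, and sorting, Merkle root and seed derivation for resolve. Your explicit conventions (digests cached at add time, contributions held by reference, $R\subseteq\mathrm{tags}(A)$ hence $|R|\le|A|$) are what make that breakdown rigorous.

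The gap is in the step you single out as the one to watch, and as written it does not give the stated bound. The alternative ``(or $O(m\log m)$ if the leaves are sorted first)'' is not optional. The Merkle tree is built over the canonically ordered leaves (Appendix~\ref{app:determinism_lemmas}), and that ordering is what makes $H'$ a function of the visible set rather than of the merge history. An insertion-ordered tree would break part~(3) of Lemma~\ref{lem:determinism}. So your argument proves $O(m\log m)$ for merge, with $m=|A_1|+|A_2|$, which is a logarithmic factor above the claim.

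The repair is one more implementation convention: keep each state's add entries sorted by digest, with ties broken by tag. Then $A_1\cup A_2$ is a linear two-way merge of sorted lists. Filtering against $R$ (held in a hash set) and collapsing equal digests is a single linear scan. Bottom-up hashing of $m$ sorted leaves costs $m-1$ constant-size SHA-256 calls. Alternatively, radix-sort the fixed-width 256-bit keys in linear time, or recompute $H'$ lazily inside resolve, whose $O(k\log k)$ budget already pays for the sort.

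A smaller point concerns the version vector. Absorbing $\max(V_1,V_2)$ via ``each node with a nonzero entry has originated at least one add'' is valid only if removes do not advance clocks. If they do, $n$ nodes can concurrently retract the same element. That leaves one entry in $A$ and one tombstone in $R$, but $\Theta(n)$ nonzero clock entries. With dense vectors the max costs $\Theta(n)$ regardless, and the paper itself notes that version vectors are $O(n)$ in nodes. Either adopt add-only clocks explicitly, or state the merge bound as $O(|A_1|+|A_2|+n)$ and the space overhead as $O(k+n)$. Both remain independent of $p$, which is the substantive content of the theorem.
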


\section{Experimental Evaluation}\label{sec:experiments}

We validate the formal specification in three tiers.
\emph{Tier~1} uses $4\times 4$ tensors to verify algebraic properties in isolation.
\emph{Tier~2} scales to GPT-2-XL (1.5\,B parameters) and Mistral-7B (7.24\,B parameters) using independently published fine-tunes.
\emph{Tier~3} tests multi-node convergence under gossip protocols and network partitions.
Since the CRDT guarantees (Theorems~\ref{thm:crdt}--\ref{thm:convergence}) are algebraic properties independent of tensor dimensions, Tier~1 suffices for correctness; Tier~2 confirms the implementation generalises and quantifies overhead.

\subsection{Tier 1: Controlled Algebraic Verification}\label{sec:tier1_phase1}

We evaluated all 26 strategies on $4\times 4$ float64 tensors (seed~42, tolerance~$10^{-5}$) under both raw operations (Phase~1) and the \CMS{} architecture (Phase~2).
Phase~1 (Table~\ref{tab:phase1_controlled}, Appendix~\ref{app:controlled}) confirms the theoretical analysis: 21/26 strategies are commutative, 14/26 idempotent, only 1/26 (Task Arithmetic) is associative---and it fails idempotency.  System-level CRDT compliance: 0/26.
Phase~2 yields 26/26 strategies passing all four CRDT properties (commutativity, associativity, idempotency, 3-replica convergence)---104/104 individual tests.

\subsection{Tier 2: Production-Scale Validation}\label{sec:production_scale}

\subsubsection{Models and Data}\label{sec:models}

We tested on two transformer language models with independently published fine-tunes providing genuine weight divergence:
\begin{itemize}[leftmargin=*]
\item \textbf{GPT-2-XL} (1.5\,B params, 193 eligible 2D layers)~\cite{radford2019language} with three independently published fine-tunes (\emph{instruct}, \emph{domain}, \emph{wiki}).\footnote{\label{fn:models}Full HuggingFace model identifiers are listed in Appendix~\ref{app:models}.}

\item \textbf{Mistral-7B-v0.1} (7.24\,B params, 224 eligible 2D layers)~\cite{jiang2023mistral} with three fine-tunes (\emph{instruct}, \emph{hermes}, \emph{zephyr}; Appendix~\ref{app:models}).
\end{itemize}
All weights stored in float16, cast to float64 for testing to reduce rounding accumulation during the merge computation; this cast cannot recover precision lost to the source's fp16 quantisation, and we do not claim such isolation.
Experiments ran on a single NVIDIA A100-SXM4-80GB with PyTorch~2.10.0 and CUDA~12.8.

\subsubsection{Testing Methodology}\label{sec:methodology}

For each strategy and model, CRDT properties are tested via slice-based evaluation: a representative $128\times 128$ slice per unique tensor shape, with results extrapolated to all layers sharing that shape.
This approach tests the strategy's algebraic behaviour on realistic weight distributions from production models, though it does not exercise the full dimensionality of each layer.
Capped $512\times 512$ verification serves as a cross-resolution check; in one case (\texttt{ada\_merging}) it surfaced a sub-tolerance associativity violation invisible at $128\times 128$---a positive finding of the check (Section~\ref{sec:cross_scale}).
Tolerance is $\mathrm{atol}=10^{-5}$.
Phase~2 additionally tests 3-replica convergence over all six merge-order permutations.

\subsubsection{Phase 1 Results: Raw Strategy Properties at Scale}\label{sec:prod_phase1}

\begin{table*}[t]
\centering
\caption{Tier~2, Phase~1 Results: Raw CRDT property compliance at production scale. P = Pass (all tested layers pass), F = Fail. $\dagger$\,SLERP commutativity tested at $t=0.5$. $\ast$\,Verification mismatch between $128\times 128$ slice and $512\times 512$ capped test (see Section~\ref{sec:cross_scale}).}
\label{tab:phase1_production}
\small
\begin{tabular}{l ccc c ccc c}
\toprule
& \multicolumn{4}{c}{GPT-2-XL (1.5\,B)} & \multicolumn{4}{c}{Mistral-7B (7.24\,B)} \\
\cmidrule(lr){2-5} \cmidrule(lr){6-9}
Strategy & C & A & I & CRDT? & C & A & I & CRDT? \\
\midrule
ada merging         & P & P$^{\ast}$ & P & P$^{\ast}$ & P & P & P & P \\
adarank             & P & F & F & F & P & F & F & F \\
dam                 & P & F & P & F & P & F & P & F \\
dare                & F & F & F & F & F & F & F & F \\
dare ties           & F & F & F & F & F & F & F & F \\
della               & F & F & F & F & F & F & F & F \\
dual projection     & P & F & P & F & P & F & P & F \\
emr                 & P & F & F & F & P & F & F & F \\
evolutionary merge  & F & F & F & F & F & F & F & F \\
fisher merge        & P & F & P & F & P & F & P & F \\
genetic merge       & P & F & P & F & P & P & P & P \\
led merge           & P & P & P & P & P & P & P & P \\
linear              & P & F & P & F & P & F & P & F \\
model breadcrumbs   & P & F & F & F & P & F & F & F \\
negative merge      & P & F & F & F & P & F & F & F \\
regression mean     & P & F & P & F & P & F & P & F \\
repr.\ surgery      & P & F & P & F & P & F & P & F \\
safe merge          & P & F & P & F & P & F & P & F \\
slerp$^{\dagger}$   & P & F & P & F & P & F & P & F \\
split unlearn merge & P & F & F & F & P & F & F & F \\
star                & P & F & F & F & P & F & F & F \\
svd knot tying      & F & F & P & F & F & F & P & F \\
task arithmetic     & P & P & F & F & P & P & F & F \\
ties                & P & F & F & F & P & F & F & F \\
weight average      & P & F & P & F & P & F & P & F \\
weight scope align. & P & F & P & F & P & F & P & F \\
\midrule
\textbf{Totals}     & 21 & 3 & 14 & 2 & 21 & 4 & 14 & 3 \\
\bottomrule
\end{tabular}
\end{table*}

The core finding is unchanged from controlled experiments: \emph{associativity remains the dominant failure mode}, with 22--25 of 26 strategies failing associativity across both models and scales.
The 2--3 strategies that newly pass associativity at production scale (and the 2--3 that achieve all three properties simultaneously) represent numerical coincidence on specific weight distributions, not algebraic compliance (Section~\ref{sec:cross_scale}).

\subsubsection{Phase 2 Results: \CMS{} at Scale}\label{sec:prod_phase2}

All 26 strategies achieve 100\% CRDT compliance through the two-layer architecture on both models: 104 strategy-level tests per model (26 strategies $\times$ 4 properties), verified across 193 layers (GPT-2-XL) and 224 layers (Mistral-7B).
In total, 43{,}368 layer-level property checks pass at capped tensor resolution ($128\times 128$ slices, with $512\times 512$ capped verification).
Full-layer verification on a representative subset---6 strategies (weight averaging, task arithmetic, TIES, DARE, SLERP, Fisher merging) covering all strategy categories (linear, stochastic, binary-fold) across the 10 largest weight matrices per model (up to $6144\times 1600$ for Mistral-7B)---is consistent with the slice-based results, with the \texttt{ada\_merging} cross-resolution discrepancy (Section~\ref{sec:cross_scale}) as the sole exception captured by the check.
Combined with Tier~1 (104 controlled tests), the architecture achieves 100\% compliance across 312 strategy-level tests and $43{,}368 + 104 = 43{,}472$ total layer-level evaluations.

\subsection{Cross-Scale Analysis}\label{sec:cross_scale}

Table~\ref{tab:cross_scale} summarises CRDT property compliance across all three evaluation scales.

\begin{table*}[t]
\centering
\caption{Cross-scale summary of raw Phase~1 CRDT property compliance (consolidating Tables~\ref{tab:phase1_controlled} and~\ref{tab:phase1_production}).  Commutativity and idempotency rates are stable across scales; associativity shows minor variation (see text).  $\ast$\,Strategies passing all~3 properties at production scale represent \emph{numerical coincidence} on specific weight distributions---not algebraic compliance---as demonstrated by the \texttt{ada\_merging} verification mismatch (Section~\ref{sec:cross_scale}).  The two-layer architecture provides the only \emph{guaranteed} compliance (26/26 at all scales).}
\label{tab:cross_scale}
\small
\begin{tabular}{lccccc}
\toprule
Scale & Layers & C & A & I & All\,3 \\
\midrule
Controlled ($4\times 4$) & --- & 21/26 & 1/26 & 14/26 & 0/26 \\
GPT-2-XL (1.5\,B) & 193 & 21/26 & 3/26 & 14/26 & 2/26$^{\ast}$ \\
Mistral-7B (7.24\,B) & 224 & 21/26 & 4/26 & 14/26 & 3/26$^{\ast}$ \\
\midrule
\multicolumn{6}{l}{\CMS{} (Phase~2): 26/26 at all three scales.}\\
\bottomrule
\end{tabular}
\end{table*}

Commutativity (21/26) and idempotency (14/26) are stable across all three scales, confirming these properties are determined by algorithmic structure.
Associativity varies at the margin: 2--3 strategies that fail on controlled $4{\times}4$ tensors pass associativity within floating-point tolerance at production scale (1/26$\to$3/26 on GPT-2-XL; 1/26$\to$4/26 on Mistral-7B).
The \texttt{ada\_merging} verification mismatch (marked $\ast$ in Table~\ref{tab:phase1_production})---where associativity passes within tolerance on $128\times 128$ slices but fails on $512\times 512$ slices of the same GPT-2-XL weight matrix---demonstrates that empirical compliance is resolution-dependent: the associativity violation is real but small enough to fall within $\mathrm{atol}=10^{-5}$ at low resolution, only surfacing at higher resolution where the accumulated error exceeds tolerance.
This fragility is why \emph{algebraic guarantees} (Corollary~\ref{cor:universal}), not empirical coincidence, are necessary for distributed systems requiring hard convergence.
The phenomenon itself---associativity violations that vanish in high-dimensional spaces---is worth characterising formally.
One hypothesis is that weight distributions in large models concentrate near low-rank manifolds where the nonlinear components of merge operations (normalisation, projection) become approximately linear, reducing the left--right association gap below floating-point tolerance.
A formal characterisation of when ``approximate associativity'' emerges would clarify which strategies can rely on it.

\subsection{Performance Overhead}\label{sec:performance}

The CRDT layer introduces negligible overhead: $\mathrm{merge}()$ is sub-millisecond regardless of model size (set operations only); $\mathrm{add}()$ is dominated by SHA-256 hashing ($O(p)$); $\mathrm{resolve}()$ CRDT overhead (sorting, Merkle root, seed derivation) is consistently below 0.5\,ms, with total latency dominated by the strategy itself.
Memory overhead is below 10\,KB for 16 contributions.
Scalability benchmarks on the A100 confirm linear scaling in parameter count, consistent with $O(k\log k + T_\sigma(k,p))$.
This overhead is dwarfed by inference, fine-tuning, and strategy execution costs.

\subsection{Tier 3: Multi-Node Convergence Suite}\label{sec:convergence_suite}

To validate convergence under realistic distributed conditions, we execute a four-part convergence suite using the \texttt{crdt-merge} library (v0.9.4).
The gossip protocol is \emph{push-based all-pairs}: in each round, every node sends its full CRDT state to every other node, which merges it locally via Eq.~\ref{eq:merge}.
For $n$ nodes this requires $n(n{-}1)$ directed merge calls per round; since each call is $O(|A_1|{+}|A_2|)$ (set union, independent of tensor size), the gossip phase scales as $O(n^2)$ in node count while remaining $O(1)$ in model size.
This all-pairs protocol is a \emph{prototype for validation purposes}, chosen as the simplest correct implementation; production deployments should use epidemic (randomised) gossip~\cite{koloskova2019decentralized}, which reduces per-round communication to $O(n)$ at the cost of slower convergence and is a natural scalability optimisation beyond ${\sim}50$ nodes.
Full results appear in Appendix~\ref{app:convergence}.

\paragraph{Multi-node convergence.}
One hundred nodes each contribute a $512 \times 512$ tensor ($262{,}144$ parameters per contribution; $26{,}214{,}400$ in aggregate across the 100 nodes) using \texttt{slerp}.
Across 20 random gossip orderings, all nodes converge to a bitwise-identical result (max element-wise difference $= 0$), with average gossip time 492.8\,ms and average resolve time ${\sim}19.7$\,s (Table~\ref{tab:convergence}).

\paragraph{Partition healing.}
The 100 nodes are split into 10 isolated partitions (10 nodes each).
Each partition converges internally to a distinct hash.
After healing, full gossip resumes and all 100 nodes converge to a single bitwise-identical result, confirming Strong Eventual Consistency under network partitions.

\paragraph{Cross-strategy sweep.}
All 26 strategies are tested on 10 nodes with $64 \times 64$ tensors.
For each strategy, all 10 nodes converge to the same final hash (Table~\ref{tab:strategy_sweep}), confirming that convergence is strategy-independent.

\paragraph{Scalability.}
Convergence is verified from 2 to 50 nodes.
Gossip time scales as $O(n^2)$ (expected for all-pairs merge), while per-call \texttt{merge()} cost remains $O(1)$ in tensor size.
All scales achieve 100\% convergence (Table~\ref{tab:scalability}).
At $k{=}200$ contributions, gossip requires $200 \times 199 = 39{,}800$ merge calls (directed pairs, consistent with the all-pairs protocol above); since each call is $O(1)$ in tensor size (set union only), the gossip phase remains tractable even for large consortia.

\section{Discussion}


\begin{remark}[Downstream Equivalence]\label{rem:downstream}
The CRDT wrapper does not modify the merged model.
Layer~2 applies the identical merge strategy $\sigma$ to the identical ordered contribution set with the identical random seed (Theorem~\ref{thm:convergence}).
Therefore, the downstream task performance of CRDT-wrapped merging is identical by construction to that of non-CRDT merging using the same strategy, contributions, and ordering.
Evaluating downstream performance of specific strategies is a question about the strategies themselves---not the CRDT architecture, which is transparent to the merge computation.
By construction, the CRDT-wrapped $\mathrm{resolve}()$ invokes the same strategy on the same inputs as a direct call; we verified the implementation matches this construction byte-for-byte for representative strategy--model pairs, confirming the wrapper introduces zero computational divergence.
The extensive literature on merge strategy quality~\cite{yang2024survey, wortsman2022model, ilharco2023editing, yadav2023ties, yu2024dare, matena2022merging, akiba2025evolutionary, goddard2024mergekit} applies directly to CRDT-wrapped merges.
A single downstream benchmark (e.g., MMLU or HellaSwag) would provide additional empirical confirmation of this transparency property; we defer such evaluation to follow-up work focused on strategy selection, as it is orthogonal to the convergence guarantees established here.
\end{remark}

\paragraph{Associativity as the Dominant Failure Point.}
Associativity is the fundamental obstacle to CRDT compliance: 22--25 of 26 strategies fail across all scales (Table~\ref{tab:cross_scale}).
Proposition~\ref{prop:incompat} explains this structurally: normalisation inherently breaks associativity.
The cross-scale analysis (Section~\ref{sec:cross_scale}) further shows that the few strategies passing empirically at production scale represent numerical coincidence---resolution-dependent and fragile---underscoring the need for algebraic guarantees over empirical testing.

\paragraph{Implications for Federated and Decentralised Learning.}
The architecture enables fully asynchronous, peer-to-peer model merging with guaranteed convergence, complementing---but not replacing---federated learning~\cite{mcmahan2017communication}.
While FL guarantees convergence of \emph{training} under data distribution assumptions with a central coordinator~\cite{kairouz2021advances, li2020fedprox}, CRDT-Merge guarantees convergence of \emph{state} under computational determinism with no coordinator.
Neither subsumes the other: the approaches are complementary.
Decentralised FL via gossip protocols~\cite{lian2017decentralized, koloskova2019decentralized} could use CRDT-Merge for aggregation.
The domain-specific engineering---Merkle-root-derived seeding, SHA-256-based canonical ordering, and explicit handling of binary-to-$n$-ary reduction---represents the domain-specific contribution beyond the known CRDT composition pattern~\cite{shapiro2011comprehensive}.

\subsection{Floating-Point Determinism}\label{sec:fp_determinism}

Theorem~\ref{thm:convergence} requires Assumption~\ref{ass:determinism}: all replicas must produce bitwise-identical results.
In practice, this is satisfied by containerised deployment with identical binaries and hardware, deterministic CUDA operations, or quantised representations~\cite{schneider1990implementing}.
Our experiments ran on a single GPU type (A100-SXM4-80GB) and therefore do not assess cross-hardware reproducibility; cross-architecture validation (e.g., A100 vs.\ H100 vs.\ CPU) is a necessary next step before heterogeneous deployment.

We propose a concrete fallback protocol: after each $\mathrm{resolve}()$, every replica broadcasts its Merkle root of the resolved output (a single 256-bit hash).
If all roots agree, convergence is confirmed---our Tier~3 experiments (100 nodes, 20 orderings) demonstrate that this is the common case under homogeneous deployment.
If roots disagree---indicating Assumption~\ref{ass:determinism} is violated---replicas fall back to the output of a designated reference replica (lowest node ID), reducing the guarantee from independent convergence to agreement on a reference computation while preserving SEC at the cost of one additional round.
When active, this fallback degrades the system from fully decentralised SEC to coordinator-assisted SEC for the $\mathrm{resolve}()$ step only; the state management layer (Layer~1) remains fully decentralised and coordinator-free.
We state this scope explicitly: the ``coordinator-free'' claim applies unconditionally to state convergence (Layer~1) but conditionally to resolved-value agreement (Layer~2), contingent on Assumption~\ref{ass:determinism}.

\subsection{Limitations and Future Work}\label{sec:limitations}

We group limitations into four categories (expanded in Appendix~\ref{app:limitations}).

\paragraph{L1: Deployment Constraints.}
Correctness requires strategy purity (Assumption~\ref{ass:purity}) and computational determinism (Assumption~\ref{ass:determinism}), enforced through seeded randomness, canonical ordering, and containerised deployment.
For billion-parameter models, delta-state CRDTs~\cite{almeida2018delta} are essential for practical deployment (not yet implemented in the current prototype; adaptation is straightforward---see Appendix~\ref{app:limitations}); version vectors scale as $O(n)$ in nodes, replaceable by dotted version vectors~\cite{preguica2018conflict} for $n > 1{,}000$.

\paragraph{L2: Semantic Evaluation.}
The architecture guarantees \emph{syntactic} convergence but does not evaluate downstream task performance.
As Remark~\ref{rem:downstream} establishes, the CRDT wrapper is transparent, so strategy quality literature applies directly.

\paragraph{L3: Scalability.}
The system recomputes the merged model from the full contribution set on every $\mathrm{resolve}()$ call ($O(k \cdot p)$ cost); incremental strategies are needed for very large contribution sets.
Three mitigation paths exist: (1)~caching the resolved output and invalidating only when the contribution set changes; (2)~hierarchical resolve, where sub-groups resolve locally and a second pass merges sub-group outputs; and (3)~strategies with algebraic structure permitting incremental updates (e.g., weight averaging admits $O(p)$ updates per new contribution).
Tombstone garbage collection via causal stability analysis~\cite{baquero2014making} prevents unbounded metadata growth; GC must be deferred until after $\mathrm{resolve}()$ has been executed and its output disseminated, ensuring all replicas resolve against the same visible set before metadata is pruned.
We have not empirically evaluated tombstone accumulation rates; for the consortium scenario ($k < 100$ contributions, infrequent removals), tombstone overhead is negligible, but long-running deployments with frequent model retraction would benefit from empirical GC characterisation.

\paragraph{L4: Security and Conflict Resolution.}
The OR-Set's add-wins policy means concurrent adds survive concurrent removes---problematic if removal represents discovery of a poisoned model.
The architecture does not currently provide Byzantine fault tolerance.
However, the two-layer separation suggests an extension: trust metadata---equivocation evidence, Merkle-root divergence, contribution-fingerprint anomalies---can itself be modelled as a monotonic CRDT within Layer~1, with a trust-gated merge at the Layer~2 boundary rejecting contributions whose converged trust score falls below a configurable threshold.
Trust convergence would then follow from the same join-semilattice proof as data convergence: given $n$ nodes with at most $f$ Byzantine actors, if evidence propagation reaches all honest nodes, the $n - f$ honest nodes converge to the same trust state and gating decisions.
Whether this pattern can deliver consensus-free Byzantine isolation in practice is open; it appears difficult to express in single-layer designs where trust and data share a lattice, and an obvious complement is integration with existing Byzantine-resilient aggregation~\cite{blanchard2017machine}.

\section{Related Work}

\paragraph{Model Merging.}
Yang et al.~\cite{yang2024survey} provide a comprehensive survey; MergeKit~\cite{goddard2024mergekit} is the standard toolkit.
Git-Theta~\cite{kandpal2023gittheta} provides version control for model parameters via Git but requires a central server, assumes a single canonical branch, and provides no conflict-free merge semantics---when two participants independently merge the same models, Git-Theta has no mechanism to guarantee convergent results.
Our architecture provides exactly this guarantee: any number of replicas can independently merge in any order and provably converge to identical states.
No prior work addresses the algebraic CRDT properties required for conflict-free distributed model merging.

\paragraph{Distributed Systems and CRDTs.}
CRDTs~\cite{shapiro2011conflict, shapiro2011comprehensive} have been extensively studied~\cite{preguica2018conflict, kleppmann2017conflict}, with Merkle-CRDTs~\cite{sanjuan2020merkle} combining Merkle-DAGs with CRDT semantics.
The pattern of composing CRDTs with deterministic functions is known~\cite{shapiro2011comprehensive}.
Our contribution is (a)~identifying that model merging benefits from this pattern due to universal associativity failure, (b)~the specific construction combining OR-Set semantics with content-addressable hashing, Merkle trees, and seeded randomness for neural network parameters, and (c)~the systematic algebraic audit of 26 strategies motivating the architecture.
Delta-state CRDTs~\cite{almeida2018delta} offer efficient synchronisation; adapting our architecture to delta-state propagation is a natural deployment optimisation.

\paragraph{Federated Learning.}
All centralised FL systems~\cite{mcmahan2017communication, kairouz2021advances, li2020fedprox, bonawitz2019towards} assume a central coordinator.
Decentralised FL via gossip protocols~\cite{lian2017decentralized, koloskova2019decentralized} focuses on training convergence, not the state convergence guarantees we establish.

\paragraph{Patents.}
The two-layer CRDT architecture is the subject of UK Patent Application No.\ GB2607132.4~\cite{gillespie2026patent}.\footnote{The patent application is referenced for completeness; the contributions of this paper stand independently of it.}
No prior patent or publication combines CRDT theory with neural network model merging.

\section{Conclusion}

Of 26 widely-used neural network merge strategies, only one (Task Arithmetic) is associative on controlled $4\times 4$ tensors---and it fails idempotency.
This is the central empirical finding behind Proposition~\ref{prop:incompat}, which traces the failure to a structural feature shared by virtually all merge methods: normalisation, projection, or thresholding each independently break the algebraic axioms a CRDT requires.
The two-layer \CMS{} architecture sidesteps the problem rather than solving it within the merge: Layer~1 manages contributions through OR-Set semantics, where set union is trivially CRDT-compliant; Layer~2 applies the chosen merge strategy deterministically over the canonically-ordered visible set.
We prove (Theorems~\ref{thm:crdt}--\ref{thm:complexity}) and empirically verify across three tiers (312 strategy-level tests, $43{,}472$ layer-level evaluations, 100-node convergence with bitwise-identical results across 20 orderings) that this composition satisfies Strong Eventual Consistency for arbitrary merge strategies under the stated preconditions, with CRDT overhead below 0.5\,ms.
Open work includes delta-state propagation for billion-parameter deployment, cross-hardware determinism validation, incremental resolve for large contribution sets, and the trust-as-CRDT Byzantine extension sketched in Section~\ref{sec:limitations}, L4.

\section*{Ethics Statement}

This work introduces infrastructure for decentralised model merging and does not involve human subjects, private data, or dual-use capabilities.
We identify no direct negative societal impacts from the CRDT architecture itself; however, decentralised merging could facilitate uncontrolled model combination without quality assurance.
Section~\ref{sec:limitations} (L4) discusses adversarial considerations and mitigation strategies.

\section*{Reproducibility Statement}

The \texttt{crdt-merge} library (v0.9.4) and accompanying verification notebook are available at \url{https://github.com/RyanGillespie/crdt-merge}.
All experiments use publicly available models from HuggingFace (Appendix~\ref{app:models}) and a single NVIDIA A100-SXM4-80GB GPU.
The complete test suite, including the Tier~1--3 verification scripts, will be open-sourced upon publication.

\section*{Acknowledgments}

The author thanks early readers for feedback on prior drafts.

\bibliographystyle{plain}
\bibliography{references}

\clearpage
\appendix

\section{Controlled Verification Results}\label{app:controlled}

This appendix presents the full per-strategy results for Tier~1 (controlled $4\times 4$ tensor) evaluation.
Table~\ref{tab:phase1_controlled} shows Phase~1 (raw tensor operations) and Table~\ref{tab:phase2_controlled} shows Phase~2 (two-layer architecture).
These results are summarised in the main text (Section~\ref{sec:tier1_phase1}) and in the cross-scale comparison (Table~\ref{tab:cross_scale}).

\begin{table*}[t]
\centering
\caption{Tier~1, Phase~1 Results: CRDT property compliance of raw merge operations on $4\times 4$ tensors. P = Pass, F = Fail. No strategy achieves system-level CRDT compliance (all three properties simultaneously). $\dagger$\,SLERP commutativity tested at $t=0.5$; fails for $t \neq 0.5$.}
\label{tab:phase1_controlled}
\small
\begin{tabular}{lcccc}
\toprule
Strategy & Commut.\ & Assoc.\ & Idemp.\ & CRDT?\\
\midrule
ada merging        & P & F & P & F \\
adarank            & P & F & F & F \\
dam                & P & F & P & F \\
dare               & F & F & F & F \\
dare ties          & F & F & F & F \\
della              & F & F & F & F \\
dual projection    & P & F & P & F \\
emr                & P & F & F & F \\
evolutionary merge & F & F & F & F \\
fisher merge       & P & F & P & F \\
genetic merge      & P & F & P & F \\
led merge          & P & F & P & F \\
linear             & P & F & P & F \\
model breadcrumbs  & P & F & F & F \\
negative merge     & P & F & F & F \\
regression mean    & P & F & P & F \\
repr.\ surgery & P & F & P & F \\
safe merge         & P & F & P & F \\
slerp$^{\dagger}$  & P & F & P & F \\
split unlearn merge & P & F & F & F \\
star               & P & F & F & F \\
svd knot tying     & F & F & P & F \\
task arithmetic    & P & P & F & F \\
ties               & P & F & F & F \\
weight average     & P & F & P & F \\
weight scope alignment & P & F & P & F \\
\midrule
\textbf{Totals}    & 21/26 & 1/26 & 14/26 & 0/26 \\
\bottomrule
\end{tabular}
\end{table*}

\begin{table*}[t]
\centering
\caption{Tier~1, Phase~2 Results: CRDT property compliance through the \CMS{} two-layer architecture on $4\times 4$ tensors. P = Pass. All 26 strategies pass all 4 properties (104/104 tests).}
\label{tab:phase2_controlled}
\small
\begin{tabular}{lccccc}
\toprule
Strategy & Commut.\ & Assoc.\ & Idemp.\ & Conv.\ & CRDT?\\
\midrule
ada merging        & P & P & P & P & $\checkmark$ \\
adarank            & P & P & P & P & $\checkmark$ \\
dam                & P & P & P & P & $\checkmark$ \\
dare               & P & P & P & P & $\checkmark$ \\
dare ties          & P & P & P & P & $\checkmark$ \\
della              & P & P & P & P & $\checkmark$ \\
dual projection    & P & P & P & P & $\checkmark$ \\
emr                & P & P & P & P & $\checkmark$ \\
evolutionary merge & P & P & P & P & $\checkmark$ \\
fisher merge       & P & P & P & P & $\checkmark$ \\
genetic merge      & P & P & P & P & $\checkmark$ \\
led merge          & P & P & P & P & $\checkmark$ \\
linear             & P & P & P & P & $\checkmark$ \\
model breadcrumbs  & P & P & P & P & $\checkmark$ \\
negative merge     & P & P & P & P & $\checkmark$ \\
regression mean    & P & P & P & P & $\checkmark$ \\
repr.\ surgery & P & P & P & P & $\checkmark$ \\
safe merge         & P & P & P & P & $\checkmark$ \\
slerp              & P & P & P & P & $\checkmark$ \\
split unlearn merge & P & P & P & P & $\checkmark$ \\
star               & P & P & P & P & $\checkmark$ \\
svd knot tying     & P & P & P & P & $\checkmark$ \\
task arithmetic    & P & P & P & P & $\checkmark$ \\
ties               & P & P & P & P & $\checkmark$ \\
weight average     & P & P & P & P & $\checkmark$ \\
weight scope alignment & P & P & P & P & $\checkmark$ \\
\midrule
\textbf{Totals}    & 26/26 & 26/26 & 26/26 & 26/26 & 26/26 \\
\bottomrule
\end{tabular}
\end{table*}

\section{Strategy Descriptions and Provenance}\label{app:strategies}

\begingroup\emergencystretch=3em\tolerance=4000\hbadness=10000
Of the 26 strategies evaluated, 15 have direct peer-reviewed publications: weight averaging~\cite{wortsman2022model}, task arithmetic~\cite{ilharco2023editing}, TIES~\cite{yadav2023ties}, DARE~\cite{yu2024dare}, DARE-TIES, Fisher merging~\cite{matena2022merging}, SLERP~\cite{shoemake1985animating}, AdaMerging~\cite{yang2024adamerging}, DELLA~\cite{deep2024della}, RegMean~\cite{jin2023regmean}, EMR-Merging~\cite{huang2024emr}, Model Breadcrumbs~\cite{davari2024breadcrumbs}, Representation Surgery~\cite{yang2024surgery}, evolutionary merge~\cite{akiba2025evolutionary}, and linear interpolation.
The remaining 11 are either derived strategies---combinations or variants of established methods (ADArank, DAM, dual projection, genetic merge, LED merge, negative merge, safe merge, split--unlearn merge)---or community toolkit utilities from MergeKit~\cite{goddard2024mergekit} (STAR, SVD knot tying, weight scope alignment).

Key equations for formal analysis:
\begin{itemize}[leftmargin=*]\tolerance=4000\emergencystretch=2em
\item \textbf{Weight Averaging / Model Soups:} $\theta_{\mathrm{merged}} = \frac{1}{n}\sum_{i=1}^{n} \theta_i$~\cite{wortsman2022model}.
\item \textbf{Task Arithmetic:} $\theta_{\mathrm{merged}} = \theta_{\mathrm{base}} + \lambda \sum_{i=1}^{n} \tau_i$ where $\tau_i = \theta_i - \theta_{\mathrm{base}}$~\cite{ilharco2023editing}.
\item \textbf{TIES-Merging:} Three-step pipeline: (1)~trim low-magnitude values, (2)~resolve sign conflicts via majority vote, (3)~merge agreed-upon components~\cite{yadav2023ties}.
\item \textbf{DARE:} Random dropout with probability $p$ and rescaling by $1/(1-p)$~\cite{yu2024dare}.
\item \textbf{Fisher-Weighted Merging:} $\theta_{\mathrm{merged}} = \frac{\sum_{i=1}^{n} F_i \odot \theta_i}{\sum_{i=1}^{n} F_i}$ where $F_i$ is the diagonal Fisher information~\cite{matena2022merging}.
\item \textbf{SLERP~\cite{shoemake1985animating}:} Spherical linear interpolation between unit vectors: $\mathrm{SLERP}(v_1, v_2; t)$ interpolates along the great circle with angle $\Omega = \arccos(\hat{v}_1 \cdot \hat{v}_2)$.
\item \textbf{Evolutionary Merging:} Population-based search over per-layer merge coefficients~\cite{akiba2025evolutionary}.
\item \textbf{Additional:} AdaMerging~\cite{yang2024adamerging} (adaptive coefficients), DELLA~\cite{deep2024della} (magnitude-based sampling), RegMean~\cite{jin2023regmean} (regression-based mean), EMR-Merging~\cite{huang2024emr} (elect-mask-rescale), Model Breadcrumbs~\cite{davari2024breadcrumbs} (sparse masks), Representation Surgery~\cite{yang2024surgery} (representation bias resolution).
\end{itemize}

\section{Proof of CvRDT Compliance (Theorem~\ref{thm:crdt})}\label{app:crdt_proof}

We verify each CRDT property by reduction to set union and component-wise maximum, both well-known semilattice operations~\cite{shapiro2011comprehensive}.

We define a partial order $\sqsubseteq$ on $\mathcal{S}$ by
\begin{equation}\label{eq:partial}
S_1 \sqsubseteq S_2 \iff A_1 \subseteq A_2 \;\wedge\; R_1 \subseteq R_2 \;\wedge\; V_1 \leq V_2,
\end{equation}
where $V_1 \leq V_2$ denotes component-wise $\leq$ on version vectors.

\begin{remark}[Partial Order and Visible Sets]\label{rem:partial}
The partial order $\sqsubseteq$ is defined on the metadata $(A,R,V)$ rather than on visible sets directly, consistent with the standard OR-Set formulation~\cite{shapiro2011conflict, shapiro2011comprehensive}.
Monotonic growth of $A$ and $R$ under $\sqsubseteq$ ensures information content increases monotonically, even though $\mathrm{Visible}(S)$ may shrink as removes are incorporated---receiving a remove is new information, and the partial order correctly reflects this.
\end{remark}

\paragraph{Commutativity.}
For states $S_1, S_2$:
\begin{align}
S_1 \sqcup S_2 &= (A_1 \cup A_2,\; R_1 \cup R_2,\; \max(V_1,V_2),\; H') \label{eq:comm_1}\\
S_2 \sqcup S_1 &= (A_2 \cup A_1,\; R_2 \cup R_1,\; \max(V_2,V_1),\; H'') \label{eq:comm_2}
\end{align}
Since set union is commutative ($A_1 \cup A_2 = A_2 \cup A_1$) and component-wise max is commutative ($\max(V_1,V_2) = \max(V_2,V_1)$), we have $S_1 \sqcup S_2 = S_2 \sqcup S_1$ (with $H' = H''$ as both are deterministic functions of the same visible set).

\paragraph{Associativity.}
For states $S_1, S_2, S_3$:
\begin{align}
(S_1 \!\sqcup\! S_2) \!\sqcup\! S_3 &= (A_1 \!\cup\! A_2 \!\cup\! A_3, \ldots) \label{eq:assoc_1}\\
S_1 \!\sqcup\! (S_2 \!\sqcup\! S_3) &= (A_1 \!\cup\! A_2 \!\cup\! A_3, \ldots) \label{eq:assoc_2}
\end{align}
By associativity of set union and component-wise max, all components are equal.

\paragraph{Idempotency.}
For state $S$:
\begin{multline}\label{eq:idemp_proof}
S \sqcup S = (A \cup A,\; R \cup R,\; \max(V,V),\; H') \\
= (A, R, V, H) = S
\end{multline}
by idempotency of set union and max.

\paragraph{Least Upper Bound.}
Under the partial order $\sqsubseteq$ (Eq.~\ref{eq:partial}), $S_1 \sqcup S_2$ satisfies $S_1 \sqsubseteq S_1 \sqcup S_2$ and $S_2 \sqsubseteq S_1 \sqcup S_2$ (since $A_i \subseteq A_1 \cup A_2$, etc.).
For any upper bound $S'$ with $S_1 \sqsubseteq S'$ and $S_2 \sqsubseteq S'$, we have $A_1 \cup A_2 \subseteq A'$, $R_1 \cup R_2 \subseteq R'$, and $\max(V_1,V_2) \leq V'$, so $S_1 \sqcup S_2 \sqsubseteq S'$.
Hence $S_1 \sqcup S_2$ is the least upper bound, confirming the semilattice structure. $\square$

\section{Individual Determinism Lemmas}\label{app:determinism_lemmas}

Lemma~\ref{lem:determinism} in the main text consolidates three properties.  We state and prove each individually here.

\paragraph{Hash Determinism.}
SHA-256 is a deterministic function.
For any model contribution $e$, the hash $\mathrm{SHA256}(e)$ is uniquely determined by $e$.
Consequently, if $\mathrm{Visible}(S_1) = \mathrm{Visible}(S_2)$, then $\{\mathrm{SHA256}(e) : e \in \mathrm{Visible}(S_1)\} = \{\mathrm{SHA256}(e) : e \in \mathrm{Visible}(S_2)\}$.
By Assumption~\ref{ass:collision}, distinct contributions map to distinct hashes with overwhelming probability.

\emph{Proof.}
SHA-256 is a standardised cryptographic hash function (NIST FIPS 180-4): a deterministic, stateless mapping from arbitrary-length byte strings to 256-bit digests.
The conclusion follows from $\mathrm{Visible}(S_1) = \mathrm{Visible}(S_2)$ and Assumption~\ref{ass:collision}. $\square$

\paragraph{Ordering Determinism.}
Let $\mathrm{sort}_{\mathrm{hash}}$ denote sorting by SHA-256 content hash.
If $\mathrm{Visible}(S_1) = \mathrm{Visible}(S_2)$ then
\[
\mathrm{sort}_{\mathrm{hash}}(\mathrm{Visible}(S_1)) = \mathrm{sort}_{\mathrm{hash}}(\mathrm{Visible}(S_2)).
\]
By Assumption~\ref{ass:collision}, the ordering is a total order on contributions (no ties) with overwhelming probability.

\emph{Proof.}
Sorting is deterministic on totally ordered sets.
By hash determinism, the hash values (and hence the total order) are identical.
Therefore the sorted sequences are equal. $\square$

\paragraph{Seed Determinism.}
The Merkle hash tree is computed deterministically from the canonically-ordered elements~\cite{sanjuan2020merkle}.
If the canonical orderings are equal, then: (1)~the Merkle roots are equal: $h(S_1) = h(S_2)$; and (2)~the derived seeds are equal: $\mathrm{seed}(h(S_1)) = \mathrm{seed}(h(S_2))$.

\emph{Proof.}
The Merkle tree is constructed by recursively hashing pairs of child nodes.
Since the leaf nodes (the canonically-ordered contribution hashes) are identical by ordering determinism, all intermediate and root hashes are identical.
The seed derivation function is a deterministic transformation of the root hash. $\square$

\section{Detailed Limitations}\label{app:limitations}

The main text (Section~\ref{sec:limitations}) groups limitations into four categories.  We expand each here.

\paragraph{L1: Deployment Constraints (expanded).}
\emph{Strategy purity:} Our correctness proofs assume strategies are deterministic pure functions (Assumption~\ref{ass:purity}).
We enforce this through seeded randomness and canonical ordering, but strategies with external dependencies (e.g., data-dependent merging) require additional care.

\emph{Computational determinism:} The convergence guarantee requires Assumption~\ref{ass:determinism}.
In practice, this is achieved through containerised deployment, deterministic CUDA operations, or quantised representations (Section~\ref{sec:fp_determinism}).
Our single-GPU experiments cannot assess cross-hardware reproducibility; multi-node validation remains open.

\emph{Delta-state synchronisation:} For billion-parameter models, transmitting the full OR-Set state is impractical.
Delta-state CRDTs~\cite{almeida2018delta}, transmitting only new $(e,t,n)$ add entries and tombstoned tags, are essential for practical deployment at scale.
Adaptation is straightforward---the OR-Set merge (Eq.~\ref{eq:merge}) already decomposes into independent set unions, each of which admits incremental delta propagation---but has not yet been implemented or empirically evaluated in the current prototype.

\emph{Version vector scaling:} Version vectors scale as $O(n)$ in the number of participating nodes.
For the typical consortium scenario ($n < 100$), this is negligible.
For $n > 1{,}000$, dotted version vectors~\cite{preguica2018conflict} or interval-based clocks should replace the current implementation.
The architecture is agnostic to the causal-ordering mechanism.

\paragraph{L2: Semantic Evaluation (expanded).}
The architecture does not address \emph{semantic} convergence---while all replicas compute the same merged model, the quality depends on the underlying strategy.
Our Tier~2 experiments validate CRDT property compliance (syntactic convergence) but do not evaluate downstream task performance.
As Remark~\ref{rem:downstream} establishes, the CRDT wrapper is transparent to the merge computation, so the extensive literature on strategy quality applies directly.
Comprehensive benchmarking (e.g., MMLU, HellaSwag, domain-specific evaluations) is an important complement.

\paragraph{L3: Scalability (expanded).}
The system recomputes the merged model from the full contribution set on every $\mathrm{resolve}()$ call, incurring $O(k \cdot p)$ memory and $T_\sigma(k,p)$ computation.
Incremental or hierarchical strategies maintaining CRDT compliance are needed for very large contribution sets.

The OR-Set's remove set $R$ grows monotonically.
Causal stability analysis~\cite{baquero2014making} identifies tombstones observed by all replicas, which can be safely discarded.
Garbage collection must be deferred until after $\mathrm{resolve}()$ has been executed and its output disseminated, ensuring all replicas resolve against the same visible set before metadata is pruned.
Empirical characterisation of tombstone growth rates and GC overhead under realistic workloads (frequent model retraction, long-running deployments) remains open; for the typical consortium scenario with $k < 100$ contributions and infrequent removals, metadata overhead is expected to be negligible.

\paragraph{L4: Security and Conflict Resolution (expanded).}
The OR-Set's add-wins policy means a concurrent add survives a concurrent remove.
If removal represents discovery of a poisoned model, this could re-introduce harmful contributions.
A remove-wins variant or explicit contribution validation (cryptographic attestation, reputation scoring) could address this.

The architecture does not currently include Byzantine fault tolerance.
The OR-Set accepts all properly formatted contributions; a malicious participant could inject poisoned parameters.
As discussed in Section~\ref{sec:limitations} (L4), the two-layer separation suggests a trust-as-CRDT extension where trust evidence (equivocation proofs, Merkle-root divergence, anomaly scores) propagates through Layer~1 as a monotonic lattice, and a trust-gated merge at Layer~2 rejects contributions once the converged trust score falls below threshold.
Whether such an extension can match the threat coverage of probabilistic robust aggregation in practice is open; integration with existing Byzantine-resilient methods~\cite{blanchard2017machine} is an obvious complement.

\paragraph{Scale-dependent associativity.}
The observation that strategies failing on controlled tensors pass within tolerance on production-scale weights (Section~\ref{sec:cross_scale}) raises theoretical questions about the relationship between algebraic properties and numerical tolerance in high-dimensional spaces.
Analysis of when associativity violations vanish at scale could inform design of ``nearly associative'' strategies.

\section{Per-Strategy Formal Analyses}\label{app:strategy_analysis}

The main text (Section~\ref{sec:problem}) presents representative analyses for weight averaging and SLERP.
We analyse the remaining major strategy families here.

\paragraph{TIES-Merging.}
TIES~\cite{yadav2023ties} operates through trimming, sign election, and disjoint merge.
Sign election over sets is commutative (majority vote does not depend on enumeration order), but the binary merge operation is order-dependent because trimming on pairs versus the full set discards different entries~\cite{yadav2023ties}.
Associativity fails because trimming thresholds depend on the set of vectors being merged---merging $(a,b)$ first applies a different threshold than merging $(b,c)$ first.
Idempotency fails because trimming thresholds are recomputed, and the trim-then-merge pipeline on $\{a,a\}$ need not recover $a$.

\paragraph{DARE.}
DARE~\cite{yu2024dare} applies a stochastic mask $m \sim \mathrm{Bernoulli}(1-p)$ and rescales by $1/(1-p)$.
All three properties fail: the stochastic mask produces different results on each invocation (violating commutativity and idempotency), rescaling factors compound under composition (violating associativity), and the mask differs per call (violating idempotency).
In Phase~1 testing, stochastic strategies were evaluated without fixed seeds to reflect their default behaviour.
The CRDT architecture (Phase~2) resolves this by deriving deterministic seeds from the Merkle root (Section~\ref{sec:layer2}).

\paragraph{Fisher-Weighted Merging.}
Fisher merging is commutative because summation is commutative~\cite{matena2022merging}.
However, associativity fails because intermediate Fisher information is lost during pairwise merging: the Fisher matrix of a merged model is not the sum of the constituent Fisher matrices.
Idempotency holds: merging a model with itself using identical Fisher weights returns the original model.

The remaining strategies (21 of 26) follow similar patterns: stochastic strategies (evolutionary, genetic merge) fail all three properties; sparsification methods (model breadcrumbs, split--unlearn merge) fail idempotency; and all strategies involving normalisation or nonlinear composition fail associativity.
Complete per-strategy results are in Tables~\ref{tab:phase1_controlled} and~\ref{tab:phase1_production}.
\endgroup

\section{Data Flow Example}\label{app:data_flow}

We illustrate the data flow with a two-node merge scenario.
Consider nodes $N_1$ and $N_2$, each with initial states $S_1$ and $S_2$:

\begin{enumerate}
\item $N_1$ fine-tunes a base model and calls $\mathrm{add}(S_1, \theta_1)$, producing state $S'_1$ with updated add set, version vector, and Merkle hash.
\item $N_2$ independently fine-tunes and calls $\mathrm{add}(S_2, \theta_2)$, producing $S'_2$.
\item When $N_1$ and $N_2$ synchronise (in either order), both compute $\mathrm{merge}(S'_1, S'_2) = \mathrm{merge}(S'_2, S'_1)$ by commutativity~\cite{shapiro2011conflict}.
\item Both nodes now have identical visible sets: $\{\theta_1, \theta_2\}$.
\item Both nodes call $\mathrm{resolve}(\cdot, \sigma, \cdot)$, sorting by hash, seeding randomness identically, and obtaining the same merged model $\theta^*$.
\end{enumerate}

For multi-party convergence with $k > 2$ nodes, associativity guarantees that the order of pairwise state merges does not affect the final state~\cite{shapiro2011comprehensive}.
Whether node $N_3$ merges first with $N_1$ or $N_2$, the final visible set---and therefore the resolved model---is identical once all states have been exchanged.

\section{Model Details}\label{app:models}

Table~\ref{tab:models} lists the HuggingFace identifiers for all models used in Tier~2 experiments.

\begin{table*}[t]
\centering
\caption{HuggingFace model identifiers for Tier~2 evaluation.}
\label{tab:models}
\footnotesize
\begin{tabular}{lp{0.65\textwidth}}
\toprule
Role & Identifier \\
\midrule
\multicolumn{2}{l}{\textbf{GPT-2-XL (1.5\,B parameters)}} \\
Base & \texttt{openai-community/\allowbreak gpt2-xl} \\
Instruct & \texttt{nicholasKluge/\allowbreak Aira-2-1B5} \\
Domain & \texttt{lgaalves/\allowbreak gpt-2-xl\_camel-ai-physics} \\
Wiki & \texttt{Clover-Hill/\allowbreak gpt2-xl-finetuned-wikitext103} \\
\midrule
\multicolumn{2}{l}{\textbf{Mistral-7B-v0.1 (7.24\,B parameters)}} \\
Base & \texttt{mistralai/\allowbreak Mistral-7B-v0.1} \\
Instruct & \texttt{mistralai/\allowbreak Mistral-7B-Instruct-v0.2} \\
Hermes & \texttt{NousResearch/\allowbreak Nous-Hermes-2-Mistral-7B-DPO} \\
Zephyr & \texttt{HuggingFaceH4/\allowbreak zephyr-7b-beta} \\
\bottomrule
\end{tabular}
\end{table*}

\section{Multi-Node Convergence Suite Results}\label{app:convergence}

All experiments use the \texttt{crdt-merge} library v0.9.4.

\subsection{Multi-Node Convergence}

Table~\ref{tab:convergence} reports convergence results for 100 nodes across 20 independently randomised gossip orderings.
Every ordering produces a bitwise-identical resolved model, confirming that the \CMS{} architecture achieves strong eventual consistency regardless of communication order.

\begin{table}[htbp]
\centering
\caption{100-node convergence across 20 random gossip orderings.
Strategy: \texttt{slerp}; tensor: $512\times 512$ ($262{,}144$ params per contribution); merges per ordering: 9{,}900.}
\label{tab:convergence}
\footnotesize
\begin{tabular}{rcccl}
\toprule
Ordering & Gossip & Resolve & Max Diff & Status \\
\midrule
1 & 503.1\,ms & 21{,}764\,ms & 0 & PASS \\
2 & 523.6\,ms & 20{,}243\,ms & 0 & PASS \\
3 & 457.4\,ms & 19{,}728\,ms & 0 & PASS \\
4 & 446.7\,ms & 19{,}137\,ms & 0 & PASS \\
5 & 472.1\,ms & 19{,}670\,ms & 0 & PASS \\
6 & 482.7\,ms & 19{,}611\,ms & 0 & PASS \\
7 & 541.7\,ms & 19{,}553\,ms & 0 & PASS \\
8 & 456.4\,ms & 20{,}133\,ms & 0 & PASS \\
9 & 465.5\,ms & 19{,}629\,ms & 0 & PASS \\
10 & 445.1\,ms & 19{,}885\,ms & 0 & PASS \\
11 & 459.5\,ms & 19{,}912\,ms & 0 & PASS \\
12 & 479.0\,ms & 18{,}830\,ms & 0 & PASS \\
13 & 624.7\,ms & 18{,}316\,ms & 0 & PASS \\
14 & 430.5\,ms & 20{,}551\,ms & 0 & PASS \\
15 & 481.5\,ms & 19{,}855\,ms & 0 & PASS \\
16 & 612.1\,ms & 20{,}569\,ms & 0 & PASS \\
17 & 486.4\,ms & 20{,}642\,ms & 0 & PASS \\
18 & 562.9\,ms & 18{,}621\,ms & 0 & PASS \\
19 & 460.8\,ms & 18{,}822\,ms & 0 & PASS \\
20 & 464.4\,ms & 18{,}528\,ms & 0 & PASS \\
\midrule
\multicolumn{2}{l}{Avg gossip: 492.8\,ms} & \multicolumn{3}{l}{All orderings bitwise equal: \textbf{YES}} \\
\bottomrule
\end{tabular}
\end{table}

\subsection{Network Partition and Healing}

One hundred nodes are split into 10 partitions (10 nodes each).
Each partition gossips internally and converges to a distinct, consistent hash.
After partition healing, all 100 nodes converge to a single bitwise-identical result.
The final hash matches the multi-node convergence result, confirming deterministic SEC recovery.

\begin{table}[htbp]
\centering
\caption{Network partition and healing: 100 nodes split into 10 isolated partitions, then healed.
Each partition converges to a distinct hash during isolation; after healing, all nodes converge to a single bitwise-identical result matching the unpartitioned experiment (Table~\ref{tab:convergence}).}
\label{tab:partition}
\footnotesize
\begin{tabular}{lr}
\toprule
Metric & Value \\
\midrule
Nodes / Partitions & 100 / 10 \\
Partition gossip time & 4.6\,ms \\
Distinct partition hashes & 10/10 \\
Healing time & 556.6\,ms \\
Post-healing convergence & 100/100 nodes \\
Bitwise identical & YES \\
Final hash matches Table~\ref{tab:convergence} & YES \\
\bottomrule
\end{tabular}
\end{table}

\subsection{Cross-Strategy Convergence Sweep}

Table~\ref{tab:strategy_sweep} verifies that all 26 merge strategies converge to a single canonical hash across 10 nodes.
This confirms that the two-layer architecture provides strategy-independent convergence: every strategy, regardless of its algebraic properties, produces an identical resolved model on every node.
Note that population-based strategies (\texttt{evolutionary\_merge}, \texttt{genetic\_merge}) incur substantially higher resolve times due to their internal search processes.

\begin{table}[htbp]
\centering
\caption{Cross-strategy convergence: all 26 strategies on 10 nodes, $64\times 64$ tensors.
All strategies produce the same canonical hash, confirming strategy-independent convergence.
Note: \texttt{evolutionary\_merge} and \texttt{genetic\_merge} exhibit resolve times approaching $90$\,s due to population-based search; all other strategies resolve in under $200$\,ms.}
\label{tab:strategy_sweep}
\footnotesize
\begin{tabular}{lrrl}
\toprule
Strategy & Gossip & Resolve & Status \\
\midrule
\texttt{ada\_merging} & 0.6\,ms & 200\,ms & PASS \\
\texttt{adarank} & 0.6\,ms & 102\,ms & PASS \\
\texttt{dam} & 0.7\,ms & 112\,ms & PASS \\
\texttt{dare} & 0.6\,ms & 80\,ms & PASS \\
\texttt{dare\_ties} & 1.0\,ms & 79\,ms & PASS \\
\texttt{della} & 0.8\,ms & 122\,ms & PASS \\
\texttt{dual\_projection} & 0.6\,ms & 26\,ms & PASS \\
\texttt{emr} & 0.6\,ms & 22\,ms & PASS \\
\texttt{evolutionary\_merge} & 0.7\,ms & 86{,}400\,ms & PASS \\
\texttt{fisher\_merge} & 0.6\,ms & 7\,ms & PASS \\
\texttt{genetic\_merge} & 0.5\,ms & 87{,}960\,ms & PASS \\
\texttt{led\_merge} & 0.6\,ms & 143\,ms & PASS \\
\texttt{linear} & 0.7\,ms & 11\,ms & PASS \\
\texttt{model\_breadcrumbs} & 0.6\,ms & 22\,ms & PASS \\
\texttt{negative\_merge} & 0.7\,ms & 8\,ms & PASS \\
\texttt{regression\_mean} & 0.7\,ms & 8\,ms & PASS \\
\texttt{representation\_surgery} & 0.7\,ms & 9\,ms & PASS \\
\texttt{safe\_merge} & 0.6\,ms & 29\,ms & PASS \\
\texttt{slerp} & 0.6\,ms & 14\,ms & PASS \\
\texttt{split\_unlearn\_merge} & 0.6\,ms & 43\,ms & PASS \\
\texttt{star} & 0.8\,ms & 101\,ms & PASS \\
\texttt{svd\_knot\_tying} & 0.8\,ms & 154\,ms & PASS \\
\texttt{task\_arithmetic} & 0.7\,ms & 8\,ms & PASS \\
\texttt{ties} & 0.6\,ms & 29\,ms & PASS \\
\texttt{weight\_average} & 0.7\,ms & 8\,ms & PASS \\
\texttt{wt\_scope\_alignment} & 0.7\,ms & 18\,ms & PASS \\
\bottomrule
\end{tabular}
\end{table}

\subsection{Scalability Benchmark}

Table~\ref{tab:scalability} measures how gossip and resolve times scale as the number of participating nodes increases from 2 to 50.
Gossip time grows quadratically in the number of nodes (reflecting all-pairs state exchange), while per-call \texttt{merge()} cost remains constant in tensor size.
As noted in Section~\ref{sec:convergence_suite}, this prototype gossip protocol is designed for validation purposes; production deployments beyond ${\sim}50$ nodes would benefit from optimised dissemination protocols.

\begin{table}[htbp]
\centering
\caption{Scalability: \texttt{slerp} on $64 \times 64$ tensors, 2--50 nodes.
Gossip time scales as $O(n^2)$ (all-pairs merges); per-call \texttt{merge()} is $O(1)$ in tensor size.}
\label{tab:scalability}
\footnotesize
\begin{tabular}{rrrr@{\hskip 6pt}r@{\hskip 6pt}l}
\toprule
Nodes & Params & Merges & Gossip & Resolve & Status \\
\midrule
2 & 8{,}192 & 2 & 0.0\,ms & 0.8\,ms & PASS \\
5 & 20{,}480 & 20 & 0.1\,ms & 3.7\,ms & PASS \\
10 & 40{,}960 & 90 & 0.6\,ms & 16.8\,ms & PASS \\
20 & 81{,}920 & 380 & 4.2\,ms & 74.7\,ms & PASS \\
30 & 122{,}880 & 870 & 12.3\,ms & 163.9\,ms & PASS \\
50 & 204{,}800 & 2{,}450 & 127.3\,ms & 445.2\,ms & PASS \\
\bottomrule
\end{tabular}
\end{table}

\end{document}